\newtheorem{algorithm}{Algorithm}
\newtheorem{claim}{Claim}
\newtheorem{definition}{Definition}
\newtheorem{example}{Example}
\newtheorem{lemma}{Lemma}
\newtheorem{proposition}{Proposition}
\newcommand{\pr}{\mathbb{P}}
\newcommand{\dis}{\mathcal{D}}
\newcommand{\gittins}{\textsf{GIPP}\xspace}
\newcommand{\rg}{\textsf{RG}\xspace}
\newcommand{\opt}{\textsf{OPT}}
\title{Robust Gittins for Stochastic Scheduling}
\date{}
\author{Benjamin Moseley\thanks{Carnegie Mellon University. B. Moseley and H. Newman were supported in part by a Google Research Award, an Infor Research Award, a Carnegie Bosch Junior Faculty Chair, NSF grants CCF-2121744 and CCF-1845146 and ONR Grant N000142212702.} \and Heather Newman$^*$ \and Kirk Pruhs\thanks{University of Pittsburgh. Supported in part by NSF grant CCF-2209654.} \and Rudy Zhou\thanks{Microsoft, Supply Chain Optimization Technologies}}
\begin{document}
\maketitle

\begin{abstract}
     A common theme in stochastic optimization problems is that, theoretically, stochastic algorithms need to ``know'' relatively rich information about the underlying distributions. This is at odds with most applications, where distributions are rough predictions based on historical data. Thus, commonly, stochastic algorithms are making decisions using imperfect predicted distributions, while trying to optimize over some unknown true distributions.
    
    We consider the fundamental problem of scheduling stochastic jobs preemptively on a single machine to minimize expected mean completion time in the setting where \textit{the scheduler is only given imperfect predicted job size distributions}. If the predicted distributions are perfect, then it is known that this problem can be solved optimally by the Gittins index policy. 
    
    The goal of our work is to design a scheduling policy that is robust
    in the sense that it produces nearly optimal schedules even if there are modest 
    discrepancies between the predicted distributions and the underlying real distributions. Our main contributions are:  
    \begin{itemize}
        \item We show that the standard Gittins index policy is \textit{not robust} in this sense. If the true distributions are perturbed by even an arbitrarily small amount, then running the Gittins index policy using the perturbed distributions can lead to an unbounded increase in mean completion time.
        \item We explain how to modify the Gittins index policy to make it robust, 
        that is, to produce nearly optimal schedules, where the approximation 
     depends on a new measure of error between the true and predicted distributions that we define.
    \end{itemize}
  Looking forward, the approach we develop here can  be applied more broadly to many other stochastic optimization problems to better understand the impact of mispredictions,  and lead to the development of new algorithms that are robust against such mispredictions.
\end{abstract}

\section{Introduction} \label{sec: intro}

In many (and probably most) stochastic optmization applications, the distributions reported to the algorithm are predictions derived from past observations, and there is no expectation that the predicted distributions are perfectly accurate. 
Yet, the vast majority of the voluminous literature on stochastic 
optimization, or at least the literature on stochastic  scheduling, assumes that the reported/predicted distributions
are perfectly accurate. 
Further, many standard stochastic optimization methods are brittle, 
and in particular not robust against small inaccuracies in the predicted distributions.
Thus, there is a need to develop more robust stochastic optimization algorithms (see, for example, \cite{Asi2019TheIO}). 
Here we contribute to this need of making stochastic scheduling -- and stochastic optimization more generally -- more robust.  In particular, we address a variant of an open problem\footnote{The setting in \cite{scully_et_al:LIPIcs.ITCS.2022.114} is the M/G/1 queue, which is a queue with Poisson arrivals and i.i.d. job sizes. On the other hand, in our work, we consider the no-arrival settings (finitely many jobs all available at time $0$) with independent (but not necessarily identically distributed) job sizes.} stated by Scully, Grosof, and Mitzenmacher \cite{scully_et_al:LIPIcs.ITCS.2022.114},
that is, whether one can design and analyze a  nonanticipatory scheduling policy that is robust with respect to minor errors in the reported probability distributions on the job sizes for the 
following classical stochastic scheduling problem:

\begin{quote}
{\bf  Scheduling Problem Definition:} 
The input consists of non-negative probability distributions $\dis_j$ for $j \in [n]$, where the $j$th job has size $P_j \sim \dis_j$. We assume that the $P_j$'s are independent.

Our goal is to construct a \emph{nonanticipatory} scheduling policy that \emph{preemptively} schedules all $n$ jobs to completion. That is, at any time $t \geq 0$, the scheduling policy selects the job to run at time $t$. A job completes if it has been run for $P_j$ units of time (not necessarily consecutively). If a job $j$ has run for $q$ units of time before time $t$ without completing, the scheduling policy knows that the probability distribution on the size of job $j$ is now $\mathcal{D}_j$ conditioned on $P_j > q$. Since preemption is allowed, the policy can switch between jobs at any time.

The objective is then to minimize the \textit{expected} total completion time, $\mathbb{E}[\sum_{j \in [n]} C_j]$ (or equivalently, mean completion time, $\mathbb{E}[\frac{1}{n} \cdot \sum_{j \in [n]} C_j]$), where $C_j$ is the time that job $j$ completes with respect to a given scheduling policy.
\end{quote}

We emphasize that we study this problem from the perspective of \emph{worst-case analysis}, so we make no assumptions on the job size distributions other than independence. Further, the scheduler is initially only given the \emph{distributions}, $\dis_j$, and observes the realized processing times $P_j$ over time as the jobs are scheduled.

This problem is a stochastic version of the
scheduling problem $1 \mid pmtn \mid \sum_j C_j$ using the standard 3-field scheduling notation~\cite{graham1979optimization}.
Many variations of this stochastic scheduling problem have been studied for decades (good overviews can be found in Pinedo~\cite{Pinedo2022} and  Megow and  Vredeveld \cite{MegowV14}).
For the moment, it is sufficient for our purposes to know the following facts. 
The nonanticipatory policy 
Shortest Expected Processing Time (SEPT), which always schedules the unfinished job whose
expected processing time is minimum, is optimal  for  distributions with increasing hazard rates \cite{Sevcik1971}.
 In this setting,  optimal means optimal with respect to 
other nonanticipatory  policies, and more precisely 
SEPT has the smallest expected total completion time of 
all nonanticipatory  policies for this class of distributions. 
The  Gittins (Index) policy 
is optimal
(again relative to other possible nonanticipatory policies) for all distributions~\cite{gittins1979bandit,Gittens}.
As it is somewhat involved, we postpone the formal definition of
the Gittins policy until Section \ref{sec: prelims},
but roughly, at each time the Gittins policy commits
to running the prefix of a job that maximizes the probability of completing during that prefix relative to the length of that prefix (appropriately accounting for the probability that the job may complete during the prefix).

Some  stochastic scheduling policies are (at least intuitively) robust,
in that they remain nearly optimal if the predicted
distributions are nearly correct. One example of this is the SEPT policy on distributions with increasing hazard rates~\cite{Sevcik1971}.
Intuitively, SEPT is robust because its
design and analysis only use unconditional expected values, which are 
relatively robust with respect to small errors in the reported distribution.
In terms of the analysis, 
if the jobs are ordered by increasing expected processing time, then the expected total completion time
of SEPT is $\sum_{j=0}^{n-1} (n-j) \cdot \mathbb{E}[P_j]$.

In contrast, other stochastic scheduling policies are quite brittle/non-robust, in that even small errors
in the reported distributions can result in quite poor schedules. For many stochastic scheduling problems, the known algorithms require carefully rescaled and truncated statistics based on exponential \cite{kleinberg1997allocating, gupta2021stochastic} and $p$th moments \cite{molinaro2019stochastic}, which are highly sensitive to errors.
As we will observe, the most natural candidate for
a robust policy for the scheduling problem that we consider, namely the Gittins policy itself, is another such example. 

The first step toward addressing the open question 
of whether there is a robust scheduling policy is
to select an appropriate definition of distance between
distributions so that we have a measure of the error in
the reported distributions. For this purpose, we introduce the following error measure.

\begin{definition} \label{def: dist_err}
   Consider two probability distributions, $\dis$ and $\dis'$,
 over $\mathbb{R}_{\geq 0}$. Let $\alpha \geq 1$ be a constant.
    Then  the pair $(\dis, \dis')$ is \emph{$\alpha$-close}  if for all $x \geq 0$, it is the case that
    \[
    \frac{1}{\alpha} \cdot \pr_{P \sim \dis}(P > \alpha x) \leq \pr_{P' \sim \dis'} (P' > x) \leq \alpha \cdot \pr_{P \sim \dis}\left(P > x/\alpha \right).
    \]
\end{definition}

Two distributions are close to each other under this definition if
the relative error in their (approximate) upper tails is always small. It is useful to consider the case that $\alpha = 1 + \varepsilon$ for some small $\varepsilon$,
in which case Definition \ref{def: dist_err} roughly says the relative error between the upper tails is at most $\varepsilon$. 
In Section \ref{sec: prelim_err}, we further motivate our choice of this distance measure
by  establishing some natural properties
that it satisfies (subsection \ref{subsec: measure-properties}), exhibiting some natural examples of distributions that are close in this sense (subsection \ref{subsec: measure-examples}), and comparing this  distance measure with other standard definitions of distance between distributions (subsection \ref{subsec: relation-distances}). Of particular note is that our definition of $\alpha$-close is
symmetric, so switching the roles of  $\dis$ and $\dis'$
does not effect the closeness of these distributions. 

With this definition of closeness in hand, our first contribution is showing that the
Gittins policy is brittle with respect to small
errors in the predicted distributions.
We then run into the issue that there are several
alternative formulations of the Gittins policy,
which, while all equivalent on perfectly accurate predicted distributions,
differ in their executions when there are errors in the distributions.
We elect to consider the formulation of the Gittins Index Priority Policy,
abbreviated as \textsf{GIPP}, that
we think is the most natural
candidate  for application
to noisy data. The one we use is based on the partitioning of jobs into subjobs called quanta (singular: quantum).
We postpone delving into this issue more deeply until Section \ref{sec: prelims}.
Having settled on this version of Gittins, we can
then show that the Gittins policy is brittle, which
requires some definitions. 

\begin{definition} \label{def: nonaticipatory}
Let
$\hat{\mathcal I} = \{\hat{\mathcal D}_j\}_{j=1}^n$
and $\mathcal{I}^* = \{\mathcal D_j^*\}_{j=1}^n$
be  collections of non-negative job size distributions. We let  
$\textsf{A}(\mathcal{I}^*, \hat{\mathcal I})$ be a nonanticipatory policy that is given access to $\hat{\mathcal I}$ at the beginning of time and is
 unaware of $\mathcal{I}^*$. At any fixed time, the policy knows  the completed job sizes and how much it has processed each incomplete job, where the job sizes are drawn from $\mathcal{I}^*$. 
\end{definition}

\begin{definition} \label{def: alg-cost}
Let $\textsf{A}( \cdot, \cdot)$ be a policy as in \Cref{def: nonaticipatory}, and
$\hat{\mathcal I} = \{\hat{\mathcal D}_j\}_{j=1}^n$
and $\mathcal{I}^* = \{\mathcal D_j^*\}_{j=1}^n$
be  collections of job size distributions.
Then we conflate
$\textsf{A}(\mathcal{I}^*, \hat{\mathcal I})$ to represent the policy itself and also the \emph{expected} total completion time for policy
\textsf{A}, using the  instance $\hat{\mathcal I}$ as the predicted distributions 
that are initially provided to $\textsf{A}$, and when 
the instance $\mathcal{I}^*$ is the true distribution on job sizes. 
\end{definition}

Note that while $\textsf{A}(\mathcal{I}^*, \hat{\mathcal I})$ 
may not necessarily be well-defined, it will be well-defined within the context of our use. In particular, with the notation defined above, we can express the optimality of Gittins (\gittins) as follows. We let $\textsf{OPT}(\mathcal{I})$ be the optimal expected cost among all nonanticapatory scheduling policies with input $\mathcal{I}$. 

\begin{restatable}[\cite{Konheim1968, Sevcik1971, Weiss}]{theorem}{propgittinsopt}\label{prop: gittins_opt}
    For any instance $\mathcal{I} = \{\mathcal{D}_j\}_{j \in [n]}$, it is the case that $\gittins(\mathcal{I}, \mathcal{I}) = \opt(\mathcal{I})$.
\end{restatable}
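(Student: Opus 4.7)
The plan is to prove $\gittins(\mathcal{I}, \mathcal{I}) = \opt(\mathcal{I})$ by the classical exchange argument. The inequality $\gittins(\mathcal{I}, \mathcal{I}) \geq \opt(\mathcal{I})$ is immediate from the definition of $\opt$, so the work lies in the reverse direction. Recall that \gittins partitions each job into quanta, where the first quantum of a job is the prefix $\tau^* > 0$ maximizing the Gittins rank $r(\tau) = \frac{\pr(P \leq \tau)}{\mathbb{E}[\min(P, \tau)]}$ under the job's current conditional distribution, and at each moment \gittins processes a quantum of maximum rank across all incomplete jobs. The argument has two ingredients: a reduction to ``quantum-respecting'' policies that preempt only at quantum boundaries, followed by an interchange argument on the quanta of any such policy.

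First, I would invoke the atomicity property of Gittins quanta: once processing of a quantum $Q$ of length $\tau^*$ begins, and up until $Q$ terminates (either by the job completing or by the quantum exhausting), the maximum rank of the job over prefixes of its current conditional distribution remains at least its initial value. Intuitively, interrupting a quantum early can only lower the rank of whatever job one would switch to, so no nonanticipatory policy benefits from preempting mid-quantum. Formally, given any nonanticipatory policy $\textsf{A}$, one builds a policy $\textsf{A}'$ that defers all preemption decisions to quantum boundaries and satisfies $\textsf{A}'(\mathcal{I}, \mathcal{I}) \leq \textsf{A}(\mathcal{I}, \mathcal{I})$, using the atomicity property together with a coupling between the two policies' randomness.

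Second, a quantum-respecting policy is fully described by its (random) sequence of processed quanta, and the expected total completion time decomposes as a sum over quanta of (expected length of the quantum conditional on reaching it) times (number of jobs still unfinished when the quantum begins processing). An adjacent-interchange lemma shows that if two consecutive quanta of ranks $r_1 \geq r_2$ are swapped so that the lower-rank one is processed first, this quantity weakly increases; the calculation reduces, after bookkeeping the probabilities of the various ``who-completes-when'' events, to the inequality already encoded by the rank definitions. A bubble-sort argument on the sequence of quanta then transforms any optimal quantum-respecting policy into the schedule produced by \gittins on $\mathcal{I}$ without increasing cost, which yields $\opt(\mathcal{I}) \geq \gittins(\mathcal{I}, \mathcal{I})$ and completes the proof.

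The main obstacle is the reduction to quantum-respecting policies. A general nonanticipatory policy can preempt at arbitrary real times using arbitrary information revealed so far, so the transformation to $\textsf{A}'$ cannot simply be described quantum-by-quantum in isolation. It requires either a careful coupling over the entire history, or equivalently an induction over the decision points of $\textsf{A}$ that repeatedly invokes atomicity to show that postponing the next preemption to the upcoming quantum boundary never increases expected total completion time. In contrast, the interchange calculation in the second step, while technical, is routine once the conditional expectations are unpacked.
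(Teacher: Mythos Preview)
The paper does not supply a proof of this theorem; it is stated with citations to \cite{Konheim1968, Sevcik1971, Weiss} and used as a black box throughout. There is therefore nothing in the paper to compare your proposal against.

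That said, your outline is a reasonable high-level sketch of the classical interchange argument for Gittins optimality in the batch setting. Two caveats. First, what you have written is a plan, not a proof: both the reduction to quantum-respecting policies and the adjacent-interchange lemma are asserted rather than carried out, and you yourself flag the former as the main obstacle. Second, the bubble-sort step needs care because the number of quanta can be unbounded in principle (though finite for finite-support distributions, which is the setting of this paper), and because the quantum decomposition of a job depends on the conditional distribution at each preemption point, so the ``sequence of quanta'' is not a fixed finite list independent of the policy. If you intend to write this out in full, the cleanest route is usually the dynamic-programming/Bellman-equation characterization of the optimal cost together with the index decomposition, as in Weiss or Gittins--Glazebrook--Weber, rather than a literal bubble sort.
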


For brevity, we also define $\gittins(\mathcal{I}) = \gittins(\mathcal{I}, \mathcal{I})$ (the latter in the sense of \Cref{def: alg-cost}). We are ready to state our lower bound, which states that \gittins is \emph{not} robust to mis-specified predicted distributions -- even if those predictions are arbitrarily close to the true distributions.

\begin{restatable}{theorem}{thmlower} \label{thm: lower} 
    For all $\alpha > 1$, and for all $n \geq 2$, 
    there exist true distributions $\dis^*_j = \dis^*_j(\alpha, n)$ which depend on $\alpha$ and $n$ for all $j \in [n]$ and predicted distributions, $\hat{\dis}_j = \hat{\dis}_j(n)$ which depend on $n$ for all $j \in [n]$. All such distributions are finitely supported, and \emph{every} pair $(\dis_j^*, \hat{\dis}_j)$ is $\alpha$-close. Then the instances $\mathcal{I}^* = \{\dis_j^*\}_{j \in [n]}$ and $\hat{\mathcal{I}} = \{\hat{\dis}_j\}_{j \in [n]}$ satisfy
    \[\textsf{GIPP}(\mathcal{I}^*, \hat{\mathcal I})  = \Omega(n) \cdot \textsf{GIPP}(\mathcal{I}^*, \mathcal{I}^*).\]
\end{restatable}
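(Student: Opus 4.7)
The plan is to construct, for each $\alpha > 1$ and $n \geq 2$, a family of instances $\mathcal{I}^* = \{\mathcal{D}^*_j\}$ and $\hat{\mathcal{I}} = \{\hat{\mathcal{D}}_j\}$ in which every pair is $\alpha$-close, but where \textsf{GIPP} run on $\hat{\mathcal{I}}$ incurs a cost that is a factor $\Omega(n)$ larger than \textsf{OPT}$(\mathcal{I}^*) = \textsf{GIPP}(\mathcal{I}^*, \mathcal{I}^*)$ (by \Cref{prop: gittins_opt}). The construction has $n-1$ ``short'' jobs for which $\hat{\mathcal{D}}_j = \mathcal{D}^*_j$ (trivially $\alpha$-close), plus a single ``tricky'' job whose predicted distribution inflates its Gittins index enough to hijack \textsf{GIPP}'s priority ordering, while the true distribution makes the job actually very long.

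First, I would design the tricky job so that $\hat{\mathcal{D}}_1$ places mass $p = (\alpha-1)/\alpha$ on one or more small values (e.g., atoms at $q, 2q, \ldots$) and the remaining mass $1/\alpha$ on a large value $L$, while $\mathcal{D}^*_1$ is concentrated near $L$. The short jobs are deterministic at a size $s$ chosen so that $\frac{p}{q} > \frac{1}{s}$, guaranteeing that the Gittins index of the tricky job under $\hat{\mathcal{D}}_1$ strictly exceeds the short-job index $1/s$. The $\alpha$-closeness of $(\mathcal{D}^*_1, \hat{\mathcal{D}}_1)$ follows by checking the tail inequalities of \Cref{def: dist_err} breakpoint by breakpoint; the key constraint is that the tail mass of $\hat{\mathcal{D}}_1$ at $L$ be at least $1/\alpha$, which is exactly $1-p$.

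Next, I would analyze \textsf{GIPP}$(\mathcal{I}^*, \hat{\mathcal{I}})$: because the tricky job's predicted Gittins index beats the short jobs', \textsf{GIPP} commits to processing it first. Under $\mathcal{D}^*_1$, the quantum does not complete, and when \textsf{GIPP} updates its belief using $\hat{\mathcal{D}}_1$ conditioned on the quantum having failed, the algorithm is still led to keep the tricky job ahead of the short jobs for as long as $\hat{\mathcal{D}}_1$'s conditional index remains high. By layering multiple small atoms in $\hat{\mathcal{D}}_1$ (subject to the $\alpha$-close bound on the total small-atom mass $\leq 1 - 1/\alpha$), I would arrange for \textsf{GIPP} to waste $W$ units of time on the tricky job before abandoning it. Every one of the $n-1$ short jobs is delayed by $W$, contributing $\Omega(nW)$ extra to the sum of completion times. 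Meanwhile \textsf{GIPP}$(\mathcal{I}^*, \mathcal{I}^*)$ correctly ranks the tricky job last (its true Gittins index is $O(1/L)$), so its cost is dominated by $O(ns)$ from the short jobs plus $L$ from the tricky job itself; choosing the parameters $s, q, L, n$ so that $nW$ dominates this baseline by a factor of $n$ gives the claimed ratio.

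The main obstacle will be reconciling two tensions imposed by $\alpha$-closeness: the total mass of small atoms in $\hat{\mathcal{D}}_1$ is capped by $(\alpha-1)/\alpha$, which limits both (i) how much the Gittins index can be inflated relative to $1/s$ and (ii) how many consecutive failed quanta can sustain a high conditional index. The delicate part of the proof is therefore the quantitative trade-off between the number of atoms, their spacing $q$, the masses, the short-job size $s$, and the long value $L$, in order to drive $nW$ strictly past the baseline cost by an $\Omega(n)$ factor. I would handle this by tuning $s, q, L$ and the atom structure as explicit functions of $\alpha$ and $n$, and then verifying the ratio by direct computation of the expected sums of completion times under each policy.
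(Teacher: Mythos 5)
Your plan — a single ``tricky'' job with an inflated predicted Gittins index that hijacks the priority order ahead of $n-1$ deterministic short jobs — cannot produce the claimed $\Omega(n)$ gap, and the tension you flag at the end is not a tuning problem but a fundamental obstruction. Here is why. Let $W = \sum_k q_k$ be the total length of the tricky job's failed quanta. For the $k$-th quantum of the tricky job (length $q_k$, attained processing $y_k$) to outrank a short job of size $s$, its rank $\frac{S_k - S_{k+1}}{S_k \, q_k}$ must exceed $1/s$, where $S_k = \pr(\hat P_1 > y_k)$. The $\alpha$-close constraint against a true distribution concentrated near $L$ forces $S_k \geq 1/\alpha$ for every $k$ with $y_k < L/\alpha$. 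Summing the rank inequality over $k$ gives $\frac{\alpha-1}{\alpha} \geq \sum_k (S_k - S_{k+1}) > \sum_k \frac{S_k q_k}{s} \geq \frac{W}{\alpha s}$, hence $W < (\alpha-1)\,s$. So one tricky job can delay the short jobs by at most $O(s)$ in total. The resulting extra completion time is at most $(n-1)\cdot(\alpha-1)s = O(ns)$, which is a \emph{lower-order} term against the baseline $\textsf{GIPP}(\mathcal I^*,\mathcal I^*) = \Theta(n^2 s)$ coming from the short jobs alone; the ratio you obtain tends to $1$, not $\Omega(n)$. (You also seem to estimate OPT on the short jobs as $O(ns)$, but the sum of completion times for $n-1$ sequential size-$s$ jobs is $\Theta(n^2 s)$; fixing that only makes the gap worse.)

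The missing idea is that the damage from a single job's misprediction must be amplified. The paper's construction makes \emph{all $n$ jobs} tricky: every job is predicted to finish at size $1$ with high probability but truly finishes at $1+\varepsilon$, with a rare huge size $M=n^2$ occurring with probability $1/n$. \textsf{GIPP} on the predicted instance first runs every job for $1$ unit (completing none under the true distribution), then runs them to completion in a fixed order. The cost blow-up does not come from the wasted first-phase time (that is only $n$), but from the second phase: with constant probability a realized size-$M$ job sits early in the order and blocks $\Omega(n)$ other jobs, contributing $\Omega(n^3)$, while $\opt(\mathcal I^*)$ is $O(n^2)$ since it runs each job $1+\varepsilon$ first, completes all short jobs, and only the $O(1)$ long jobs remain. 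Your construction has no analogue of this second-phase blocking: the short jobs are unaffected by each other's ordering, so the misprediction cannot compound across jobs. If you want to salvage your route, you essentially need $\Theta(n)$ tricky jobs and an ordering effect in the recovery phase, at which point you have reconstructed the paper's instance.
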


In other words, \Cref{thm: lower} says that that the cost of the Gittins policy using erroneous predicted distributions (which are even arbitrarily close to the true distributions) grows at an unbounded (in fact, linear in the number of jobs) rate compared to the optimal cost. Thus, to obtain any near-optimal policy with erroneous distributions, we need a new algorithm.

Intuitively, the  reason for the brittleness of the Gittins policy is that both the design and analysis of Gittins depend on 
conditional probability distributions derived from the job size distributions, 
and for these instances the conditional probability distributions can be quite brittle with respect to small errors in the predicted job
size distributions. 
In particular, in the proof of Theorem \ref{thm: lower},  in each true distribution $\mathcal{D}^*_j $ the
size $P_j^* \sim \dis_j^*$ is likely $1+\varepsilon$, but with probability $1/M$ it
is of some big size $M$, and in 
each reported distribution $\hat{\mathcal{D}_j }$ the
size $\hat{P}_j \sim \hat{\dis}_j$ is likely $1$, but with probability $1/M$ it
is of some large size $M$. 
Using these predicted distributions,
the Gittins policy will deprioritize jobs once
they have run for a unit of time because 
it thinks that their conditional expected remaining processing
time is large, when in fact it is small. 

We then turn to the open question of whether
there is a robust policy for this problem.
Our main contribution is a positive answer to this question.
Our robust policy is a modest modification of the
Gittins policy that  naturally arises from consideration
of the lower bound instances  in the 
proof of Theorem \ref{thm: lower}. 
We uncreatively call this new policy the Robust Gittins 
policy, or more succinctly, \textsf{RG}. Roughly, if the Gittins policy would
preempt a job $j$ after running it continuously for $q$
time units, the Robust Gittins policy would run
$j$ for an additional $(\alpha - 1) q$ time units. 
Thus in the lower bound instance for Theorem \ref{thm: lower},
the Robust Gittins policy would not make the mistake that
the Gittins policy makes of preempting a job right
before it is done. We are now ready to state our
bound on the performance of the Robust Gittins policy. Note that our result assumes true and predicted distributions have finite support; see subsection \ref{subsec: gittins-finite-support} for further discussion on this assumption.

\begin{restatable}{theorem}{thmupper} \label{thm:clairvoyant_main} 
Let $\alpha \ge 1$. Let
$\mathcal{I}^* = \{\dis_j^*\}_{j \in [n]}$ be a collection of true size distributions with finite support on $n$ jobs, and $\hat{\mathcal{I}} = \{\hat{\dis}_j\}_{j \in [n]}$ be a collection of predicted size distributions with finite support on $n$ jobs. Further assume that \emph{every} pair of distributions $(\dis^*_j, \hat{\dis}_j)$ is $\alpha$-close.
Then
\[\textsf{RG}(\mathcal{I}^*, \hat{\mathcal I})  \le  \alpha^6 \cdot\textsf{GIPP}(\mathcal{I}^*, \mathcal{I}^*).\]
\end{restatable}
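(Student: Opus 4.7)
The plan is to relate $\rg(\mathcal{I}^*, \hat{\mathcal I})$ directly to $\gittins(\mathcal{I}^*, \mathcal{I}^*) = \opt(\mathcal{I}^*)$ by comparing the quantum (subjob) decompositions that the Gittins index induces on $\hat{\mathcal I}$ (which $\rg$ uses, after stretching each quantum by $\alpha$) and on $\mathcal{I}^*$ (which $\opt$ uses). The general recipe is the standard one for Gittins-type analyses: write the expected total completion time as a sum, over (job, quantum) pairs, of the cumulative work charged to a quantum weighted by the probability that it is ever reached (i.e., the job has not yet completed and the quantum has not been preempted in favor of a higher-priority one). If each term in the $\rg$ decomposition can be matched to a corresponding term in the $\gittins$ decomposition paying only a $\mathrm{poly}(\alpha)$ factor, the theorem follows.

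The first technical step is a rank/quantum correspondence lemma. The $\alpha$-closeness of $(\dis_j^*, \hat{\dis}_j)$ should force the Gittins rank function $\hat r_j(\cdot)$ derived from $\hat{\dis}_j$ to be close, up to polynomial-in-$\alpha$ distortions in both its argument and its value, to the true rank function $r_j^*(\cdot)$. Since the rank at a point is an infimum of tail-ratio quantities, this should follow by applying the tail inequalities of \Cref{def: dist_err} separately to the numerator and denominator. These bounds then translate into quantitative comparisons of (a) corresponding quantum lengths, (b) per-quantum completion probabilities under the two distributions, and (c) the induced cross-job priority ordering. The $\alpha$-stretching built into $\rg$ then guarantees that whenever $\opt$ would complete a job within some prefix, the corresponding stretched $\rg$-prefix is long enough under the \emph{true} distribution $\dis_j^*$ to realize a comparable completion probability—this is precisely the mechanism that sidesteps the lower bound of \Cref{thm: lower}.

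Given these per-job comparisons, I would set up a charging argument that maps each stretched $\rg$-quantum to the set of $\mathcal{I}^*$-quanta it subsumes. Each of the three sources of distortion contributes a bounded power of $\alpha$: the $\alpha$-stretching contributes $\alpha$ per quantum, the rank/priority distortion contributes $\alpha^{O(1)}$, and the per-quantum completion-probability distortion contributes a further $\alpha^{O(1)}$. Compounding these through the completion-time decomposition is what should yield the announced $\alpha^6$ factor. The hardest step will be handling the \emph{interleaving} across jobs: even with matching per-job quantum structure, $\rg$ and $\opt$ can process jobs in very different orders, because rank distortions for different distributions need not align. The crux will be an exchange-style argument showing that, for any fixed level of the rank function, the total work $\rg$ performs before clearing all jobs through that level is within $\alpha^{O(1)}$ of the corresponding quantity for $\opt$. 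Boundary cases—where a stretched $\rg$-quantum straddles two $\opt$-levels, or where cross-job priority comparisons flip—are where the careful bookkeeping needed to land an exponent as tight as $6$ will live, and the finite-support hypothesis should ensure that this bookkeeping reduces to a finite induction over rank levels.
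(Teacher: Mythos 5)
Your proposed route has a genuine gap at its very first technical step: the ``rank/quantum correspondence lemma'' is false. The rank $R_j(q,y)$ cannot be written purely in terms of (conditional) \emph{tails}; rather, after multiplying numerator and denominator by $\mathbb{P}(P_j > y)$, it is
\[
R_j(q,y)\;=\;\frac{\mathbb{P}(P_j>y)-\mathbb{P}(P_j>y+q)}{\int_0^{q}\mathbb{P}(P_j>y+t)\,dt},
\]
and the numerator is a \emph{difference} of two tail probabilities. $\alpha$-closeness controls relative error of each tail separately, but not of their difference: the two controlled quantities can nearly cancel, so the relative error of the numerator can be unbounded no matter how close $\alpha$ is to $1$. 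This is precisely the phenomenon that the lower-bound instance of \Cref{thm: lower} exploits --- a support point at $1$ moved to $1+\varepsilon$ drops the predicted rank's numerator $\mathbb{P}(\hat{P}>y)-\mathbb{P}(\hat{P}>y+q)$ from $1-p$ to $0$ for the true distribution --- so a lemma of the form ``$\hat r_j$ is $\mathrm{poly}(\alpha)$-close to $r_j^*$ up to argument rescaling'' would contradict that theorem (and also the observation that the naive Gittins policy is brittle). Everything downstream of that lemma (the priority-order comparison, the exchange argument across rank levels, the claimed source of each $\alpha$-power) therefore rests on a premise that does not hold.

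The paper's proof avoids comparing rank functions of $\hat{\mathcal{I}}$ and $\mathcal{I}^*$ entirely, and this is the key structural difference from your plan. It pivots through the intermediate quantity $\gittins(\hat{\mathcal{I}},\hat{\mathcal{I}})$ and proves a single lemma, $\rg(\mathcal{I}^*,\hat{\mathcal{I}}) \le \alpha^3 \cdot \gittins(\hat{\mathcal{I}},\hat{\mathcal{I}})$ (\Cref{lem:clairvoyant_compare}), which it then applies a second time with the roles of $\hat{\mathcal{I}}$ and $\mathcal{I}^*$ swapped --- legal because $\alpha$-closeness is symmetric (\Cref{lem: dist_sym}). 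The chain is
\[
\rg(\mathcal{I}^*, \hat{\mathcal{I}}) \;\le\; \alpha^3\, \gittins(\hat{\mathcal{I}}, \hat{\mathcal{I}}) \;\le\; \alpha^3\, \rg(\hat{\mathcal{I}}, \mathcal{I}^*) \;\le\; \alpha^6\, \gittins(\mathcal{I}^*, \mathcal{I}^*),
\]
where the middle inequality is just the optimality of Gittins on $\hat{\mathcal{I}}$ among all policies. Crucially, both sides of the lemma's inequality are evaluated on the \emph{same} quantum decomposition and the \emph{same} Gittins priority order (that of $\hat{\mathcal{I}}$), so the closed-form cost expression of \Cref{lem: closed-form-gipp} can be compared term by term, and each term is a product of pure tail probabilities and integrals of tails --- exactly the quantities $\alpha$-closeness \emph{does} control (this is the content of \Cref{clm: term-wise-lower-bd}). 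No cross-instance matching of quanta or priority orders is ever needed. If you want to salvage a direct-comparison argument, the thing to establish is stability of the \emph{cost formula's summands} (tails and integrated tails), not stability of the rank, and you still need a mechanism to handle the mismatch in priority orders --- which is what the pivot through $\gittins(\hat{\mathcal{I}},\hat{\mathcal{I}})$ and Gittins optimality give you for free.
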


Note that the right-hand side is just the optimal expected cost for the \textit{true} distributions. So in the case that $\alpha =  1+ \varepsilon$ for some
small $\varepsilon$,  Theorem \ref{thm:clairvoyant_main}
states that the expected total completion time for our Robust Gittins policy 
is roughly within a $(1+ 6 \varepsilon)$ factor of the optimal expected total completion time,
 despite receiving the potentially erroneous predicted job size distributions as input. 

Theorem 
\ref{thm:clairvoyant_main} 
is proven in Section \ref{sec:RG} using the following sequence of inequalities:
\[\rg(\mathcal{I}^*, \hat{\mathcal{I}}) \leq \alpha^3 \cdot \gittins(\hat{\mathcal{I}}, \hat{\mathcal{I}}) \leq \alpha^3 \cdot \rg(\hat{\mathcal{I}}, \mathcal{I}^*) \leq \alpha^6 \cdot \gittins(\mathcal{I}^*, \mathcal{I}^*). \]
These inequalities are established using the fact that $\mathcal{I}^*$
and $\hat{\mathcal{I}}$ are $\alpha$-close, the similarity of 
the $\rg$ policy and the $\gittins$ policy, and
the optimality of the $\gittins$ policy. But of particular note is
that, as one can see from these inequalities, our analysis relies on the symmetry of the definition of $\alpha$-close, and this symmetry is to a significant extent responsible for
the relative cleanliness of our analysis.

\subsection{Related Work} \label{sec: related-work}

  The literature on stochastic scheduling is sufficiently large that covering it here would be overly
  ambitious. A good starting point into the literature is the text by Pinedo~\cite{Pinedo2022}. 
  We will content ourselves by mentioning a few additional results for the particular scheduling problem
  that we consider.  Chazan \cite{Chazan1968} and Konheim \cite{Konheim1968} gave necessary and sufficient conditions for a policy to be optimal. Sevcik \cite{Sevcik1971} introduced an intuitive method to create an optimal policy and later Weiss \cite{Weiss} showed the connection to Gittins index. 
  If the job sizes are deterministic, then the online policy  Smallest Processing Time (SPT),
which always runs the job of smallest size, is optimal. 

Our model for analyzing the effect of inaccurate predicted distributions is most similar to that of D{\"u}tting and Kesselheim \cite{prophet-inaccurate-priors} on prophet inequalities with inaccurate priors, and of Banihashem et. al. \cite{banihashem2025pandora} on Pandora's box with inaccurate priors. Both papers use standard distance metrics to measure the distance between inaccurate and accurate prior distributions, and use this distance to parameterize the error in welfare / utility that results from using existing algorithms but with inaccurate priors. We follow the same framework, but key differences are that we use a novel error metric more suited to the problem, and, due to strong lower bounds, develop a novel algorithm.
  
 The literature on stochastic scheduling, on the other hand, that considers the effect of mispredictions in the
 reported distributions is comparatively small. We now give an overview of the  papers in the literature
 that are closest to this work. We emphasize that these results concern the M/G/1 queue, which is a different scheduling environment from ours; further, one is not a special case of the other. In the M/G/1 queue, jobs arrive over time according to a Poisson process, but all jobs are identically distributed. On the other hand, in our problem, all jobs are available at time $0$, but they can have arbitrary (non-identical) independent distributions.

\paragraph{Stochastic Scheduling with Mispredictions:} Scully, Grosof, and Mitzenmacher \cite{scully_et_al:LIPIcs.ITCS.2022.114} considered erroneous job size information in the M/G/1 queue. They give the scheduler a stochastically drawn estimate $z_j$ of the true (stochastically drawn) job size $s_j$ of job $j$, rather than an estimate $\hat{\mathcal{D}}$ of the true job size distribution $\mathcal{D}$ as in our setting. The $(s_j, z_j)$ are drawn i.i.d. from a single joint distribution $(S,Z)$; the support of $(S,Z)$ is restricted so that the estimated size has bounded \textit{multiplicative} error from the true size: $z_j \in [\beta s_j, \alpha s_j]$ for some $\alpha > \beta \geq 0$, $\beta < 1$. Because the scheduler is given the realization $z_j$, rather than distributional information as in our setting, the authors compare their scheduling policy's cost to that of Shortest Remaining Processing Time (SRPT) (which is optimal for \textit{known} job sizes \cite{schrage1968proofSRPT}) rather than to the Gittins policy, which turns out to also be optimal in the M/G/1 setting with \textit{unknown} job sizes \cite{scullyGittinsOpt}. The authors achieve graceful degradation of performance with respect to error using a variant of SRPT. They also contribute a result bounding the multiplicative gap, dubbed the \textit{price of misprediction} in \cite{mitzenmacher-price-misprediction}, between naively running the Shortest Job First policy using the job size estimates and running the same (suboptimal) policy using the true sizes. Wierman and Nuyens \cite{wierman} also considered SRPT and variants in the M/G/1 queue, but use an \textit{additive} notion of error that they additionally assume updates over time.

Scully and Harchol-Balter \cite{scully-age-error} likewise work in the M/G/1 queue. They considered error not in the scheduler's knowledge of job size distributions, but rather, in their knowledge of the ages (amount of service a job has received so far) of jobs at any point in time, which is critical information for the Gittins policy. They are motivated by examples in time-shared or networked systems where acquiring / updating age information incurs cost. Unlike in \cite{scully_et_al:LIPIcs.ITCS.2022.114}, they consider \textit{additive} (vs. multiplicative), \textit{adversarially chosen} (vs. stochastic) error: at any point in time, the scheduler knows a perturbed age $b \in [a-\Delta, a+\Delta]$ for a job with exact age $a$, where $\Delta \geq 0$ is the (maximum amount of) error. As in our setting, the scheduler has knowledge of the error bound $\Delta$. The authors first show that the naive Gittins policy, which uses the perturbed ages as given in order to compute the rank function (see Section \ref{sec: prelims}), is \textit{not} a robust policy for mean response time. (The analogous result in our setting is \Cref{thm: lower}.) Then, they show that their \textit{shift-flat Gittins policy}, a variation of Gittins that shifts a job's perturbed age back by $\Delta$ before computing rank, then flattens the rank function around local maxima, is robust, meaning that as $\Delta \to 0$, the expected cost of the policy approaches the expected cost of the Gittins policy under zero noise. We note that, despite the difference between our setting and theirs in terms of which information is erroneous, our scheduling policy is similar in spirit to the shift-flat policy. 

Note that in addition to the above works considering the M/G/1 queue, another difference from our setting is the model of error. On the other hand, there is also a line of work on scheduling with predictions when job sizes are unknown in the non-stochastic setting, where both release dates (when they exist) and job sizes are deterministic \cite{purohit2018improving, im2023non,  azar2021flow, azar2022distortion, lindermayr2022permutation}.

\paragraph{Multi-armed Bandits:} While not set in a stochastic scheduling setting, we also mention the work of Kim and Lim \cite{kim2016robust} on making the Gittins index robust to distributional errors in the setting of multi-armed bandits, which recall is the setting in which the Gittins index policy was originally formulated \cite{Gittens, gittins1979bandit}. They model uncertainty in the transition probability distributions as follows: in response to each choice of arm by the decision-maker, nature chooses a transition probability distribution adversarially. In their ``robust bandit problem,'' confidence in the given probability distribution model is expressed by building into the value function a cost to nature's deviations from the given model, which is quantified using relative entropy. The authors define a (suboptimal) ``robust Gittins index'' policy that is more tractable to compute than the optimal policy for the robust bandit problem. They also provide experiments in which their robust Gittins index outperforms naive Gittins for the robust bandit problem.

\section{New Error Measure for Distributions} \label{sec: prelim_err}

The goal of this section is to motivate our choice of distance measure between distributions
by  establishing some natural properties
that it satisfies, exhibiting some natural examples of   distributions that are close in this sense, and comparing this  distance measure with other standard definitions of distance between distributions.
 We first generalize the  definition of $\alpha$-close distributions to \emph{families} of distributions by applying the above definition to each pair in the family.

\begin{definition} \label{def: dist_err_family}
    Given families of probability distributions $\{\dis_j\}_{j \in [n]}$ and $\{\dis'_j\}_{j \in [n]}$, all over $\mathbb{R}_{\geq 0}$, we say the pair of families $(\{\dis_j\}_{j \in [n]}, \{\dis'_j\}_{j \in [n]})$ is \emph{$\alpha$-close} for $\alpha \geq 1$ if for all $j \in [n]$, the pair of distributions $(\dis_j, \dis_j')$ is $\alpha$-close in the sense of \Cref{def: dist_err}.
\end{definition}

Next, we will show that our error measure satisfies some natural properties.

\subsection{Basic Properties} \label{subsec: measure-properties}

As stated, our error measure is defined for an \emph{ordered pair} of distributions, $(\dis, \dis')$. Our first property is that actually this ordering is irrelevant.

\begin{lemma}[Symmetric]\label{lem: dist_sym}
    If distribution pair $(\dis, \dis')$ is $\alpha$-close in the sense of \Cref{def: dist_err}, then $(\dis', \dis)$ is also $\alpha$-close.
\end{lemma}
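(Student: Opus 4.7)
The plan is to verify the symmetric condition by applying each of the two inequalities in Definition \ref{def: dist_err} after an appropriate change of variable. Concretely, I want to show that for all $y \geq 0$,
\[
\frac{1}{\alpha} \cdot \pr_{P' \sim \dis'}(P' > \alpha y) \leq \pr_{P \sim \dis}(P > y) \leq \alpha \cdot \pr_{P' \sim \dis'}(P' > y/\alpha).
\]

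For the upper bound, I would start from the \emph{left} inequality of the hypothesis, $\frac{1}{\alpha}\pr_{\dis}(P > \alpha x) \leq \pr_{\dis'}(P' > x)$, rearrange it to $\pr_{\dis}(P > \alpha x) \leq \alpha \pr_{\dis'}(P' > x)$, and then substitute $y = \alpha x$ (equivalently $x = y/\alpha$), which yields $\pr_{\dis}(P > y) \leq \alpha \pr_{\dis'}(P' > y/\alpha)$, exactly as required. Note that $y$ ranges over all of $\mathbb{R}_{\geq 0}$ as $x$ does, so the substitution is valid for every $y \geq 0$.

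For the lower bound, I would instead start from the \emph{right} inequality of the hypothesis, $\pr_{\dis'}(P' > x) \leq \alpha \pr_{\dis}(P > x/\alpha)$, and substitute $x = \alpha y$ (so $x/\alpha = y$), obtaining $\pr_{\dis'}(P' > \alpha y) \leq \alpha \pr_{\dis}(P > y)$, which after dividing by $\alpha$ is precisely the desired lower bound.

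There is no real obstacle here: the symmetry is built into the definition, since the two inequalities bounding $\pr_{\dis'}(P' > x)$ in Definition \ref{def: dist_err} are essentially mirror images of one another, with the factor of $\alpha$ appearing both multiplicatively on probabilities and as a rescaling of the threshold. The change of variables $x \mapsto \alpha y$ and $x \mapsto y/\alpha$ interchanges the two sides and recovers the bound with the roles of $\dis$ and $\dis'$ swapped.
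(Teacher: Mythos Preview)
Your proof is correct and essentially identical to the paper's own argument: both verify the two required inequalities by substituting $x = \alpha y$ into the right-hand inequality of the hypothesis and $x = y/\alpha$ into the left-hand inequality, then rearranging. The only cosmetic difference is the order in which the two bounds are derived.
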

\begin{proof}
    Suppose $(\dis, \dis')$ is $\alpha$-close. We let $P \sim \dis$ and $P' \sim \dis'$. Then for any $x' \geq 0$, the definition of $\alpha$-close applied to $(\dis, \dis')$ and $x = \alpha x'$ gives
    \[\pr(P' > \alpha x') \leq \alpha \cdot \pr(P > x') \Rightarrow \frac{1}{\alpha} \cdot \pr(P' > \alpha x') \leq \pr(P > x'),\]
    via the right-side inequality. Similarly, taking $x = \frac{x'}{\alpha}$ gives
    \[\frac{1}{\alpha} \cdot \pr(P > x') \leq \pr(P' > x'/\alpha) \Rightarrow \pr(P > x') \leq \alpha \cdot \pr(P' > x'/\alpha)
    ,\]
    via the left-side inequality. Combining the above two bounds on $\pr(P > x')$ for all $x' \geq 0$ gives that $(\dis', \dis)$ is $\alpha$-close, as required.
\end{proof}

In light of the above lemma, instead of saying that ``$(\dis,\dis')$ is $\alpha$-close", from now on we will simply say ``$\dis$ and $\dis'$ (or equivalently, $\dis'$ and $\dis$) are $\alpha$-close." Similarly, when extending \Cref{def: dist_err} to families of distributions $\{\dis_j\}_{j \in [n]}$ and $\{\dis'_j\}_{j \in [n]}$, we will now say that the families ``$\{\dis_j\}_{j \in [n]}$ and $\{\dis'_j\}_{j \in [n]}$ are $\alpha$-close" rather than specifying an order. Next, we show that our error measure is monotone in $\alpha$.

\begin{lemma}[Monotone]\label{lem: dist_mono}
    Let $1 \leq \alpha < \alpha'$. If $\dis$ and $\dis'$ are $\alpha$-close, then they are also $\alpha'$-close. Further, $\dis$ and $\dis'$ are $1$-close if and only if $\dis$ and $\dis'$ are identical distributions.
\end{lemma}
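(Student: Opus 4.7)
The plan is to unpack the two inequalities in Definition~\ref{def: dist_err} and push them through monotonicity of the tail function $x \mapsto \pr(P > x)$, which is non-increasing. The monotonicity in $\alpha$ follows because enlarging $\alpha$ to $\alpha'$ loosens both inequalities in the same direction on both sides.

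Concretely, fix $x \geq 0$ and write $P \sim \dis$, $P' \sim \dis'$. For the upper bound $\pr(P' > x) \leq \alpha' \cdot \pr(P > x/\alpha')$, I would note that since $\alpha' \geq \alpha$ and $x \geq 0$, we have $x/\alpha' \leq x/\alpha$, so by monotonicity of the tail, $\pr(P > x/\alpha') \geq \pr(P > x/\alpha)$. Multiplying by $\alpha' \geq \alpha$ and invoking the $\alpha$-close upper bound gives
\[
\alpha' \cdot \pr(P > x/\alpha') \geq \alpha \cdot \pr(P > x/\alpha) \geq \pr(P' > x).
\]
For the lower bound $\frac{1}{\alpha'} \cdot \pr(P > \alpha' x) \leq \pr(P' > x)$, the same monotonicity gives $\pr(P > \alpha' x) \leq \pr(P > \alpha x)$, and dividing by $\alpha' \geq \alpha$ and applying the $\alpha$-close lower bound gives
\[
\tfrac{1}{\alpha'} \cdot \pr(P > \alpha' x) \leq \tfrac{1}{\alpha} \cdot \pr(P > \alpha x) \leq \pr(P' > x).
\]
This establishes $\alpha'$-closeness.

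For the second claim, if $\dis = \dis'$ then the two tail functions coincide, and plugging $\alpha = 1$ into Definition~\ref{def: dist_err} gives the trivial chain $\pr(P > x) \leq \pr(P' > x) \leq \pr(P > x)$. Conversely, if $\dis$ and $\dis'$ are $1$-close, then taking $\alpha = 1$ in the definition at any $x \geq 0$ forces $\pr(P > x) = \pr(P' > x)$, so the two distributions have identical complementary CDFs on $\mathbb{R}_{\geq 0}$ and hence are the same distribution.

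There is no real obstacle here; the only thing to be slightly careful about is the direction of the inequality under the substitutions $x \mapsto x/\alpha$ and $x \mapsto \alpha x$, and to handle $x = 0$ (where $x/\alpha = x/\alpha' = 0$ and $\alpha x = \alpha' x = 0$, so both inequalities reduce to statements already implied by $\alpha$-closeness). The proof is short enough that I would write it out in full rather than as a sketch.
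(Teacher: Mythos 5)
Your proof is correct and takes essentially the same route as the paper's: both arguments exploit monotonicity of the tail function to show $\tfrac{1}{\alpha'}\pr(P > \alpha' x) \leq \tfrac{1}{\alpha}\pr(P > \alpha x)$ and $\alpha \pr(P > x/\alpha) \leq \alpha' \pr(P > x/\alpha')$, then combine with the $\alpha$-close hypothesis; the $1$-close characterization is handled identically by equating tails. Your version is slightly more explicit in spelling out both directions of the iff, but there is no substantive difference.
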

\begin{proof}
    Consider any $1 \leq \alpha < \alpha'$, and suppose $\dis$ and $\dis'$ are $\alpha$-close. We let $P \sim \dis$ and $P' \sim \dis'$. Then for any $x \geq 0$ we have
    \[
        \frac{1}{\alpha'} \cdot \pr(P > \alpha' x) \leq \frac{1}{\alpha} \cdot \pr(P > \alpha x),
    \]
    and
    \[
        \alpha \cdot \pr\left(P > x/\alpha\right) \leq \alpha' \cdot \pr\left(P > x/\alpha'\right).
    \]
    Combining the above two inequalities with the definition of $\alpha$-close gives that $\dis$ and $\dis'$ are also $\alpha'$-close.

    Next, suppose $\dis$ and $\dis'$ are $1$-close. Then for any $x \geq 0$, we have $\pr(P > x) = \pr(P' > x)$. Since $P$ and $P'$ are non-negative, this means that the CDFs of $P$ and $P'$ are equal, i.e., that $\dis$ and $\dis'$ are the same distribution.
\end{proof}

Finally, our error measure satisfies a natural (multiplicative) ``triangle inequality." This is particularly useful to compose different operations which lead to $\alpha$-close distributions (some of which we introduce in the next section).

\begin{lemma}[Composition]\label{lem: dist_comp}
    Let $\dis$, $\dis'$, $\dis''$ be distributions such that $\dis$ and $\dis'$ are $\alpha_1$-close and $\dis'$ and $\dis''$ are $\alpha_2$-close. Then $\dis$ and $\dis''$ are $\alpha_1 \alpha_2$-close.
\end{lemma}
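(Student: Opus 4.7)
The plan is to prove both the upper-tail inequality and the lower-tail inequality of the $\alpha_1\alpha_2$-closeness definition by chaining the two given $\alpha_i$-closeness inequalities, using an appropriate substitution of the variable $x$ at the intermediate step so that the tail probabilities of $\mathcal{D}'$ cancel out cleanly.

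Concretely, fix $x \geq 0$ and let $P \sim \mathcal{D}$, $P' \sim \mathcal{D}'$, $P'' \sim \mathcal{D}''$. For the upper bound, I would start from the right-hand inequality of $\alpha_2$-closeness applied at $x$, namely $\Pr(P'' > x) \leq \alpha_2 \Pr(P' > x/\alpha_2)$, and then apply the right-hand inequality of $\alpha_1$-closeness at the point $x/\alpha_2$ to bound $\Pr(P' > x/\alpha_2) \leq \alpha_1 \Pr(P > x/(\alpha_1\alpha_2))$. Multiplying yields the desired bound $\Pr(P'' > x) \leq \alpha_1\alpha_2 \Pr(P > x/(\alpha_1\alpha_2))$. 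For the lower bound, I would symmetrically start from the left-hand inequality of $\alpha_2$-closeness, $\Pr(P'' > x) \geq \frac{1}{\alpha_2}\Pr(P' > \alpha_2 x)$, and then apply the left-hand inequality of $\alpha_1$-closeness at the point $\alpha_2 x$ to get $\Pr(P' > \alpha_2 x) \geq \frac{1}{\alpha_1} \Pr(P > \alpha_1\alpha_2 x)$, yielding $\Pr(P'' > x) \geq \frac{1}{\alpha_1\alpha_2}\Pr(P > \alpha_1\alpha_2 x)$.

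Since $x \geq 0$ was arbitrary, both inequalities together give exactly the definition of $\alpha_1\alpha_2$-closeness for $\mathcal{D}$ and $\mathcal{D}''$. There is no real obstacle here beyond choosing the substitutions correctly; the only point to be careful about is that the symmetry established in \Cref{lem: dist_sym} is not even needed, because the definition is already phrased in a way that chains directly. If desired, one could also present a single-paragraph proof that just writes the two inequalities as one display each, applies each $\alpha_i$-closeness statement once, and multiplies.
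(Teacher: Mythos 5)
Your proposal is correct and uses essentially the same chaining-through-$\mathcal{D}'$ argument as the paper's proof. The only cosmetic difference is the direction of the chain: you start from $\Pr(P'' > x)$ and substitute to reach $\Pr(P > \cdot)$, which lands directly on the ordered-pair form of the definition for $(\mathcal{D}, \mathcal{D}'')$ without ever invoking \Cref{lem: dist_sym}, whereas the paper centers on $\Pr(P > x)$ and arrives at the form for $(\mathcal{D}'', \mathcal{D})$, then implicitly appeals to symmetry; your remark that the symmetry lemma is not needed under your ordering is a correct (minor) observation.
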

\begin{proof}
    We let $P \sim \dis$, $P' \sim \dis'$, and $P'' \sim \dis''$. Now consider any $x \geq 0$. Since $\dis$ and $\dis'$ are $\alpha_1$-close, we have
    \[
    \frac{1}{\alpha_1} \cdot \pr(P' > \alpha_1 x) \leq \pr(P > x) \leq \alpha_1 \cdot \pr\left(P' > x/\alpha_1 \right).
    \]
    Then using the $\alpha_2$-closeness of $\dis'$ and $\dis''$ gives
    \[\frac{1}{\alpha_2} \cdot \pr (P'' > \alpha_2 \cdot \alpha_1 x) \leq \pr(P' > \alpha_1 x),\]
    and
    \[\pr\left(P' > x/\alpha_1\right) \leq \alpha_2 \cdot \pr(P'' > x/(\alpha_1 \alpha_2)).\]
    Combining the above inequalities gives that $\dis$ and $\dis''$ are $\alpha_1 \alpha_2$-close, as required.
\end{proof}

\subsection{Examples} \label{subsec: measure-examples}

Now that we have established that our error measure has some natural and desirable properties, we  give a few illustrative examples of distributions that are $\alpha$-close. Our first example is actually the initial motivation for our error measure, which is to capture the error in estimating a discrete distribution through its support points and their respective probabilities.

\begin{example}[Discrete distributions]\label{ex: discrete}
    Consider any non-negative discrete distribution $\dis$. We can specify $\dis$ in terms of its atoms $\{(s_i, p_i)\}_{i \in I}$, where $I$ is some countable index set and $\pr_{P \sim \dis}(P = s_i) = p_i$ for all $i$ such that $\sum_{i \in I} p_i = 1$.

    We now introduce three operations that modify a discrete distribution such that the resulting distribution is $\alpha$-close. For the first two operations, it is convenient to think of a discrete distribution as a histogram, and these operations as shifting the vertical/horizontal positions of each bar. The third operation simply combines the first two into a single one for convenience of analysis. We say each of the shifts have parameter $\alpha$.

    \begin{itemize}
        \item \textbf{Vertical Shift:} We modify $\dis$ to obtain $\dis'$ by modifying the probabilities of the atoms: $\dis$ and $\dis'$ are specified by $\{(s_i, p_i)\}_{i \in I}$ and $\{(s_i, p'_i)\}_{i \in I}$, respectively, where for all $i \in I$, $p'_i \in [\frac{1}{\alpha} \cdot p_i, \alpha \cdot p_i]$ (and of course maintaining that $\sum_{i \in I} p_i' = 1$).
        \item \textbf{Horizontal Shift:} We modify $\dis$ to obtain $\dis'$ by modifying the support points of the atoms. For each support point $s_i$ for $i \in I$, we can replace $s_i$ with any discrete collection of support points $\{s_{i,k}\}_k$ with corresponding probabilities $\{p_{i,k}\}_k$ such that $s_{i,k} \in [\frac{1}{\alpha} \cdot s_i, \alpha \cdot s_i]$ for all $k$ and $\sum_k p_{i,k} = p_i$. Then $\dis'$ is defined by the atoms $\bigcup_{i \in I} \{(s_{i,k}, p_{i,k})\}_k$.
        \item \textbf{Combined Shift}: We modify $\dis$ to obtain $\dis'$ as follows. For each support point $s_i$ for $i \in I$, we can replace $s_i$ with any discrete collection of support points $\{s_{i,k}\}_k$ with corresponding probabilities $\{p_{i,k}\}$  such that $s_{i,k} \in [\frac{1}{\alpha} \cdot s_i, \alpha \cdot s_i]$ for all $k$ and $\sum_k p_{i,k} \in [\frac{1}{\alpha} \cdot p_i, \alpha \cdot p_i]$. Then $\dis'$ is defined by the atoms $\bigcup_{i \in I} \{(s_{i,k}, p_{i,k})\}_k$. 
    \end{itemize}
\end{example}

    We first verify that all three operations lead to $\alpha$-close distribution pairs. Note that combined shifts generalize both vertical and horizontal shifts, so we need only consider combined shifts in the next proposition. 
    
    \begin{proposition}\label{prop: shift}
        Let $\dis$ be any non-negative discrete distribution and $\dis'$ be obtained from $\dis$ by a combined shift with parameter $\alpha$. Then $\dis$ and $\dis'$ are $\alpha$-close.
    \end{proposition}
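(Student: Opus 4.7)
The plan is to unfold both tail probabilities as sums over atoms and then pair up the atoms of $\dis'$ with their ``parent'' atoms of $\dis$ under the shift. Write $\pr_{P \sim \dis}(P > x) = \sum_{i : s_i > x} p_i$ and $\pr_{P' \sim \dis'}(P' > x) = \sum_{(i,k) : s_{i,k} > x} p_{i,k}$. The proof then reduces to two mechanical inequalities, one for each side of the $\alpha$-closeness definition, exploiting the defining bounds $s_{i,k} \in [s_i/\alpha, \alpha s_i]$ and $\sum_k p_{i,k} \in [p_i/\alpha, \alpha p_i]$.

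For the upper bound $\pr(P' > x) \leq \alpha \cdot \pr(P > x/\alpha)$: whenever $s_{i,k} > x$, the horizontal bound $s_{i,k} \leq \alpha s_i$ forces $s_i > x/\alpha$, so the set of shifted atoms contributing to $\pr(P' > x)$ is contained in the set of descendants of parent atoms with $s_i > x/\alpha$. Summing over shifted atoms and then applying the vertical bound $\sum_k p_{i,k} \leq \alpha p_i$ yields
\[
\pr(P' > x) \;\leq\; \sum_{i : s_i > x/\alpha} \sum_k p_{i,k} \;\leq\; \alpha \sum_{i : s_i > x/\alpha} p_i \;=\; \alpha \cdot \pr(P > x/\alpha).
\]

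For the lower bound $\tfrac{1}{\alpha} \pr(P > \alpha x) \leq \pr(P' > x)$: whenever $s_i > \alpha x$, the other horizontal bound $s_{i,k} \geq s_i/\alpha$ gives $s_{i,k} > x$ for \emph{every} $k$, so all descendants of such an $i$ contribute to $\pr(P' > x)$. Using $p_i \leq \alpha \sum_k p_{i,k}$ gives
\[
\pr(P > \alpha x) \;=\; \sum_{i : s_i > \alpha x} p_i \;\leq\; \alpha \sum_{i : s_i > \alpha x} \sum_k p_{i,k} \;\leq\; \alpha \cdot \pr(P' > x),
\]
and rearranging completes the proof.

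No step is expected to be a real obstacle; the only subtlety is keeping track of the direction of the horizontal bound in the two cases (upper vs.\ lower tail comparison). The upper bound uses $s_{i,k} \leq \alpha s_i$ to pull each shifted atom back to a parent above $x/\alpha$, while the lower bound uses $s_{i,k} \geq s_i/\alpha$ to push every descendant of a parent above $\alpha x$ past $x$. Once these containments are in place, the vertical bound on $\sum_k p_{i,k}$ supplies the factor of $\alpha$.
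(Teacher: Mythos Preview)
Your proof is correct and follows essentially the same approach as the paper's: both decompose the tail probabilities as sums over atoms, use the horizontal bounds $s_{i,k} \in [s_i/\alpha, \alpha s_i]$ to establish the appropriate containment of index sets, and then apply the vertical bounds $\sum_k p_{i,k} \in [p_i/\alpha, \alpha p_i]$ to extract the factor of $\alpha$.
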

    \begin{proof}
        Let $\dis$ be specified by atoms $\{(s_i, p_i)\}_{i \in I}$ and $\dis'$ by $\bigcup_{i \in I} \{(s_{i,k}, p_{i,k})\}_k$. Then let $P \sim \dis$ and $P' \sim \dis'$. For any $x \geq 0$, we have $\pr(P' > x) = \sum_{i \in I} \sum_{k : s_{i,k} > x} p_{i,k}$. Note that by definition of combined shift, $s_{i,k} > x$ implies $s_i > \frac{1}{\alpha} x$, and for any $i$ we have $\sum_k p_{i,k} \leq \alpha p_i$. Then we have
        \[\pr(P' > x) \leq \sum_{i \in I : s_i > \frac{1}{\alpha} x} \alpha p_i = \alpha \cdot \pr(P > x/\alpha).\]
        Analogously, we have $\pr(P > \alpha x) = \sum_{i \in I : s_i > \alpha x} p_i$. Again, if $s_i > \alpha x$, then for any $k$ we have $s_{i,k} > x$, so
        \[\pr(P > \alpha x) = \sum_{i \in I : s_i > \alpha x} p_i \leq \alpha \cdot \sum_{i \in I : s_i > \alpha x} \sum_k p_{i,k} \leq \alpha \cdot \sum_{i \in I} \sum_{k : s_{i,k} > x} p_{i,k} = \alpha \cdot \pr(P' > x).\]
        Combining the two equations gives that $\dis$ and $\dis'$ are $\alpha$-close, as required. 
    \end{proof}

    One might wonder if a sequence of combined shifts completely characterize $\alpha$-close discrete distributions. We give an example to show that this is \textit{not} the case.

    \begin{proposition}\label{prop: shift_ex}
        For any $\alpha > 1$, there exist non-negative discrete distributions $\dis$ and $\dis'$ that are $\alpha$-close to each other, but one cannot be obtained from the other by a combined shift.
    \end{proposition}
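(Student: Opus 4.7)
The plan is to exhibit an explicit pair where $\dis'$ places mass at a location that no combined shift of $\dis$ can reach (and vice versa), yet the global tail conditions of \Cref{def: dist_err} are still satisfied. Concretely, I would take $\dis$ to be a point mass at $1$, and $\dis'$ to be the two-atom distribution with mass $1 - 1/\alpha$ at $0$ and mass $1/\alpha$ at $\alpha$. Writing $P \sim \dis$ and $P' \sim \dis'$, the survival functions are then $\pr(P > x) = \mathbf{1}[x < 1]$ and $\pr(P' > x) = (1/\alpha)\,\mathbf{1}[x < \alpha]$.

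To establish $\alpha$-closeness, I would plug these survival functions into \Cref{def: dist_err} and check the two inequalities by direct case analysis. The upper bound $\pr(P' > x) \le \alpha \cdot \pr(P > x/\alpha)$ reduces to $(1/\alpha)\mathbf{1}[x < \alpha] \le \alpha \cdot \mathbf{1}[x < \alpha]$, which is immediate since $1/\alpha \le \alpha$. The lower bound $(1/\alpha) \cdot \pr(P > \alpha x) \le \pr(P' > x)$ reduces to $(1/\alpha)\mathbf{1}[x < 1/\alpha] \le (1/\alpha)\mathbf{1}[x < \alpha]$, which is also immediate since $\{x < 1/\alpha\} \subseteq \{x < \alpha\}$.

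The remaining step is to argue that neither distribution is a combined shift of the other, which I would do by appealing to the locality of the combined shift operation. First, $\dis'$ cannot be a combined shift of $\dis$: the unique atom of $\dis$ at $1$ can only spawn support points in $[1/\alpha, \alpha]$, but $\dis'$ has an atom at $0 \notin [1/\alpha, \alpha]$. Second, $\dis$ cannot be a combined shift of $\dis'$: the atom of $\dis'$ at $0$ can only be replaced by atoms at locations in $[0/\alpha, \alpha \cdot 0] = \{0\}$, with total mass at least $(1 - 1/\alpha)/\alpha > 0$ (since $\alpha > 1$), but $\dis$ places no mass at $0$. There is no substantive obstacle beyond identifying the right example; the conceptual takeaway, worth recording alongside the construction, is that $\alpha$-closeness is a purely tail condition, whereas a combined shift is local on the histogram and cannot transport mass between support points that differ multiplicatively by more than $\alpha^2$, and this gap in expressiveness is exactly what the example exploits.
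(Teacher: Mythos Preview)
Your proof is correct and follows essentially the same approach as the paper: both take $\dis$ to be a point mass at $1$ and $\dis'$ to be a two-atom distribution with one atom lying outside the window $[1/\alpha,\alpha]$, then verify $\alpha$-closeness by direct computation of the survival functions and rule out a combined shift by the locality constraint on support points. The only difference is cosmetic: the paper places the out-of-window atom at $1/\alpha^2$ (with the other at $1$), whereas you place it at $0$ (with the other at $\alpha$); your version also spells out both directions of the ``neither is a combined shift of the other'' claim, which the paper leaves as a parenthetical.
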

    \begin{proof}
        Consider any fixed $\alpha > 1$. Then we define the distributions $P \sim \dis$ and $ P'  \sim \dis'$ such that $P = 1$ and 
        \[P' =  \begin{cases}
        \frac{1}{\alpha^2} & w.p. \quad 1 - \frac{1}{\alpha}\\
        1 & w.p. \quad \frac{1}{\alpha}
    \end{cases}.\] 
    We first check that $\dis$ and $\dis'$ are $\alpha$-close. We observe that $P' \leq P$, which implies for any $x \geq 0$ that $\pr(P' > \alpha x) \leq \pr(P' > x) \leq \pr(P > x)$.
    
    It remains to show that $\pr(P > x) \leq \alpha \cdot \pr(P' > \frac{1}{\alpha} x)$ for all $x \geq 0$. This inequality is trivially true for all $x \geq 1$, since in this case $\pr(P > x) = 0$. Otherwise, for $x \in (0,1)$, we have $\pr(P > x) = 1$, so it remains to show that for all such $x \in (0,1)$, we have $\frac{1}{\alpha} \leq \pr(P' > \frac{1}{\alpha} x)$. One can check that
    \[\pr(P' > x/\alpha) = \begin{cases}
        1 ,& x < \frac{1}{\alpha}\\
        \frac{1}{\alpha} ,& \frac{1}{\alpha} \leq x < 1
    \end{cases},\]
    which gives the desired inequality. We conclude that $\dis$ and $\dis'$ are $\alpha$-close.

    Second, we show that $\dis'$ \textit{cannot} be obtained from $\dis$ by a combined shift (and vice-versa). To see this, we observe that the support point $\frac{1}{\alpha^2}$ of $\dis'$ is \textit{not} within a $\alpha$-factor of $1$, the only support point of $\dis$. 
    \end{proof}

Note that our error measure applies to any non-negative distribution, not just discrete ones. Our next two examples show that for parameterized families of distributions, we can translate errors in the parameter into bounds in our error measure.

Our first example is the exponential distribution, which is parametrized by $\lambda > 0$ such that $P \sim Exp(\lambda)$ satisfies $\pr(P > x) = e^{- \lambda x}$ for all $x \geq 0$.

\begin{proposition}[Exponential distribution] \label{prop exp_close}
    Let $0 < \lambda \leq \lambda'$. Then the distributions $Exp(\lambda)$ and $Exp(\lambda')$ are $\frac{\lambda'}{\lambda}$-close.
\end{proposition}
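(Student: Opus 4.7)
The plan is to set $\alpha := \lambda'/\lambda \geq 1$, plug in the closed-form tail $\pr(P > y) = e^{-\lambda y}$ and $\pr(P' > y) = e^{-\lambda' y}$, and verify the two inequalities in \Cref{def: dist_err} directly. The key arithmetic observation is that $\lambda \cdot \alpha = \lambda'$ (so the ``stretched'' tail of $P$ becomes the tail of $P'$ exactly), while $\lambda/\alpha = \lambda^2/\lambda' \leq \lambda'$ (since $\lambda \leq \lambda'$).

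For the left inequality in \Cref{def: dist_err}, I would compute
\[
\frac{1}{\alpha}\cdot \pr(P > \alpha x) \;=\; \frac{1}{\alpha}\cdot e^{-\lambda \alpha x} \;=\; \frac{1}{\alpha}\cdot e^{-\lambda' x} \;\leq\; e^{-\lambda' x} \;=\; \pr(P' > x),
\]
where the inequality just uses $\alpha \geq 1$. For the right inequality, I would use $\lambda/\alpha \leq \lambda'$ to write
\[
\pr(P' > x) \;=\; e^{-\lambda' x} \;\leq\; e^{-(\lambda/\alpha) x} \;=\; \pr(P > x/\alpha) \;\leq\; \alpha\cdot \pr(P > x/\alpha),
\]
again using $\alpha \geq 1$ in the final step. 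Combining these two chains gives exactly the condition for $Exp(\lambda)$ and $Exp(\lambda')$ to be $\alpha$-close with $\alpha = \lambda'/\lambda$.

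There is no real obstacle here: the whole argument is a short calculation once the tails are written out, and the asymmetry between $\lambda$ and $\lambda'$ dissolves thanks to the symmetry of $\alpha$-closeness established in \Cref{lem: dist_sym} (so I can assume WLOG $\lambda \le \lambda'$ without loss of generality in the presentation). The only thing worth being a bit careful about is which side of each inequality corresponds to which direction of scaling, since both inequalities in \Cref{def: dist_err} are genuinely used --- but as shown above, both slack on the factor of $\alpha$, so neither is tight and no case analysis is required.
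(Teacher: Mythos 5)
Your proposal is correct and takes essentially the same approach as the paper: a direct verification of \Cref{def: dist_err} by plugging in the closed-form exponential tail and exploiting the exact identity $e^{-\lambda\alpha x} = e^{-\lambda' x}$ (you verify the definition with $Exp(\lambda)$ in the first slot, the paper with $Exp(\lambda')$, but by \Cref{lem: dist_sym} these are the same).
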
 
\begin{proof}
    We let $P \sim Exp(\lambda)$ and $P' \sim Exp(\lambda')$, where $\lambda \leq \lambda'$. Then by by definition of the exponential distribution, for any $x \geq 0$, we have
    \[\pr\left(P' > \frac{\lambda'}{\lambda} x\right) \leq \pr(P' > x) = e^{-\lambda' x} \leq e^{-\lambda x} = \pr(P > x),\]
    and
    \[\pr(P > x) = e^{-\lambda x} = e^{-  \lambda' \frac{\lambda}{\lambda'} x} = \pr\left(P' > \frac{\lambda}{\lambda'} x\right).\]
    Combining the above two inequalities gives that $Exp(\lambda)$ and $Exp(\lambda')$ are $\frac{\lambda'}{\lambda}$-close, as required. 
\end{proof}

Our second example is the Pareto distribution, which is parametrized by $m > 0$ and $\beta >0$ such that $P \sim Pareto(m,\beta)$ satisfies $\pr(P > x) = \begin{cases}
\left(\frac{m}{x}\right)^\beta & x \geq m\\
1 & x < m
\end{cases}$.

\begin{proposition}[Pareto distribution] \label{prop pareto_close}
    Let $0 < m \leq m'$ and $\beta > 0$. We define $\alpha = \frac{m'}{m}$ Then the distributions $Pareto(m, \beta)$ and $Pareto(m', \beta)$ are $\frac{m'}{m}$-close.
\end{proposition}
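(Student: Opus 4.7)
The plan is to exploit the scale invariance of the Pareto family. Specifically, if $P \sim Pareto(m,\beta)$ and $c > 0$, then $cP \sim Pareto(cm,\beta)$, which one verifies directly from $\pr(cP > x) = \pr(P > x/c)$ and the piecewise definition of the survival function in the statement of the proposition. Setting $c = \alpha = m'/m$ therefore gives $P' \stackrel{d}{=} \alpha P$, where $P \sim Pareto(m,\beta)$ and $P' \sim Pareto(m',\beta)$, so that
\[
\pr(P' > x) \;=\; \pr(P > x/\alpha) \qquad \text{for every } x \geq 0.
\]

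With this identity in hand, both inequalities required by \Cref{def: dist_err} collapse. The right-hand inequality $\pr(P' > x) \leq \alpha \cdot \pr(P > x/\alpha)$ becomes $\pr(P > x/\alpha) \leq \alpha \cdot \pr(P > x/\alpha)$, which holds because $\alpha \geq 1$ and survival probabilities are non-negative. For the left-hand inequality, note that $\alpha \geq 1$ implies $\alpha x \geq x/\alpha$, so by monotonicity of the survival function $\pr(P > \alpha x) \leq \pr(P > x/\alpha)$, and dividing by $\alpha \geq 1$ yields $\tfrac{1}{\alpha}\,\pr(P > \alpha x) \leq \pr(P > x/\alpha) = \pr(P' > x)$.

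I anticipate no genuine obstacle: once the scaling identity $P' \stackrel{d}{=} \alpha P$ is recorded, $\alpha$-closeness is immediate from $\alpha \geq 1$ and monotonicity of survival functions. The one place to exercise mild care is verifying the scaling identity on both pieces of the Pareto survival function (the flat region $x < cm$ and the power-law tail $x \geq cm$), which amounts to a two-line case check.
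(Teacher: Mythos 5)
Your proof is correct and takes a genuinely different route from the paper's. The paper verifies the two inequalities of Definition~\ref{def: dist_err} directly by a case split on $x$ (the flat region $x < m'$ and the tail $x \geq m'$) and explicit survival-function computations. You instead isolate the structural observation that $P' \stackrel{d}{=} \alpha P$, after which both inequalities reduce to $\alpha \geq 1$ and monotonicity of survival functions, with no case analysis needed. Your route is not only cleaner; it actually proves a more general statement for free: for \emph{any} non-negative distribution $\dis$ and any $\alpha \geq 1$, $\dis$ and the law of $\alpha P$ (with $P \sim \dis$) are $\alpha$-close, since nothing in your argument after the scaling identity uses the Pareto form. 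The paper's computation is in effect rediscovering this scaling identity in the line $\pr(P > x/\alpha) = (\alpha m / x)^\beta = (m'/x)^\beta = \pr(P' > x)$, but without extracting it as a reusable principle. The only thing the paper's version buys is that the explicit algebra also exhibits the factor $\alpha^{-2\beta}$ that quantifies the slack in the left-hand inequality on the tail, which is informative but not needed for the claim.
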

\begin{proof}
    We let $P \sim Pareto(m, \beta)$ and $P' \sim Pareto(m', \beta)$, where $m \leq m'$. For any $x \geq 0$, we split into a few cases.

    First if $x < m'$, we have $\pr(P' > x) = 1$, so trivially $\frac{1}{\alpha} \cdot \pr(P > \alpha x) \leq 1$. For the other inequality, we observe that $\frac{x}{\alpha} = \frac{m}{m'} \cdot x < m$, so $\alpha \cdot \pr(P > \frac{x}{\alpha}) = \alpha \geq 1$. Combining these bounds gives the desired inequality for the case $x < m'$.

    Otherwise, $x \geq m'$. In this case $\pr(P' > x) = \left( \frac{m'}{x} \right)^\beta$. We have $\pr(P > \alpha x) = \left( \frac{m}{\alpha x} \right)^\beta = \alpha^{- \beta} \left( \frac{m'}{x} \cdot \frac{m}{m'} \right)^\beta = \alpha^{-2\beta} \left( \frac{m'}{x} \right)^\beta \leq \pr(P > x)$. Similarly, $\pr(P > x / \alpha) = \left( \frac{\alpha m}{x} \right)^\beta = \left( \frac{m'}{x} \right)^\beta = \pr(P' > x)$. We conclude that $Pareto(m, \beta)$ and $Pareto(m', \beta)$ are $\alpha = \frac{m'}{m}$-close, as required.
\end{proof}

We note that deviations in the $\beta$-parameter lead to distributions which are \emph{not} $\alpha$-close for any constant $\alpha$. To see this, consider $P \sim Pareto(1, \beta)$ and $P' \sim Pareto(1, \beta')$ with $0 < \beta < \beta'$. Now fix any $\alpha \geq 1$ and consider $x > 1$. Then $\pr(P > \alpha x) = (\alpha x)^{- \beta} = \alpha^{-\beta} \cdot x^{- \beta'} x^{\beta' - \beta} = \alpha^{-\beta} x^{\beta' - \beta} \cdot \pr(P' > x) = \omega(\pr(P' > x))$.

\subsection{Relation to Other Distances} \label{subsec: relation-distances}

There are a number of known distance measures on probability distributions; notably, all standard ones of which we are aware are \textit{additive}. We refer the reader to Section 3 of \cite{prophet-inaccurate-priors} for an excellent discussion of the standard ones and the relationships between them.

The distance measure most related to our $\alpha$-closeness definition (Definition \ref{def: dist_err}) is the L{\'e}vy metric. Given two distributions $\mathcal{D}, \mathcal{D}'$ with cumulative distribution functions $F$ and $F'$, respectively, the L{\'e}vy distance between them is defined as 
\[d_L(\mathcal{D}, \mathcal{D}') = \inf\{\varepsilon \geq 0 \mid F'(x-\varepsilon) - \varepsilon \leq F(x) \leq   F'(x+\varepsilon) + \varepsilon  \quad \forall x \in \mathbb{R}\}.\]
Concretely, one can visualize distributions that are $\varepsilon$-close according to the L{\'e}vy distance as follows: form two bands by taking the CDF $F'$ of $\mathcal{D}'$ and shifting it to the left by $\varepsilon$ and up by $\varepsilon$, and again but to the right by $\varepsilon$ and down by $\varepsilon$. Then the CDF $F$ of $\mathcal{D}$ must lie between these two bands. In this way, we can think of the L{\'e}vy metric as allowing both horizontal and vertical shifts.\footnote{Other distance measures, like the Wasserstein metric (Earth Mover's Distance) $d_W$, also capture this notion, and in fact $d_W \leq 4d_L$ and $d_L \leq \sqrt{d_W}$ for distributions supported on $[0,1]$ \cite{prophet-inaccurate-priors}. It is unclear to us whether there is a multiplicative analogue of Earth Mover's Distance with a similar relationship to our distance measure.}

Our definition of $\alpha$-closeness can be thought of as a multiplicative version of the L{\'e}vy metric, as it likewise allows for vertical and horizontal shifts. (Note that a stronger requirement is given by the Kolmogorov metric, which does not allow for horizontal shifts as in the L{\'e}vy metric but is otherwise the same. We likewise could have defined our error measure to only allow vertical shifts, but we obtain strictly more general results by also accommodating horizontal shifts.)

The use of a multiplicative distance measure is for practical and technical reasons. Intuitively, this is because we are willing to tolerate larger errors for longer jobs (a 1 minute error in a job that takes 1 hour is much less impactful than if the job only took 1 second). Technically, suppose the predicted distributions $\{\hat{\mathcal{D}}_j\}_{j \in [n]}$ are each deterministic with support 1. To form the $n$ true distributions $\mathcal{D}^*_j$, add a single support point with arbitrarily large value $M_j$ and small probability $\varepsilon$, and let the $M_j$ increase arbitrarily with $j$. Then $d_L(\mathcal{D}^*_j, \hat{\mathcal{D}}_j) \leq \varepsilon$ for all $j$. An algorithm receiving the $n$ predicted distributions as input is effectively a nonclairvoyant algorithm once all jobs complete 1 unit of work. The knowledge that the predicted distributions are $\varepsilon$ distance away from the true distributions in the L{\'e}vy metric is irrelevant for determining in what order to run the jobs that realize to sizes larger than 1, so the algorithm must now act clairvoyantly.

\section{Gittins Preliminaries} \label{sec: prelims}

We first introduce some notations and concepts, which we will need to describe the Gittins index policy and, later, our modification of it. Let $\mathcal{I} = \{\mathcal{D}_j\}_{j \in [n]}$ be an instance of our scheduling problem. We roughly follow the definitions in \cite{MegowV14}, but with some departures in notation.

It is convenient to specify preemptive schedules in terms of \emph{quanta} (singular: quantum). A \emph{quantum} is a scheduling decision, which is defined by a pair $(j, q)$, where $j \in [n]$ indexes a job and $q \geq 0$ is the \emph{length} of the quantum. Running quantum $(j,q)$ corresponds to running job $j$ for $q$ further time units or until job $j$ completes --- whichever comes first. Precisely, if we have already processed job $j$ for $y$ time units, then running quantum $(j,q)$ takes $\min\{P_j - y, q\}$ time units.

There are two main relevant quantities for a quantum $(j,q)$: the \emph{investment} and \emph{rank}. The investment is the expected processing time of quantum $(j,q)$ given that $j$ has already been processed for $y \geq 0$ time units:
\begin{equation} \label{eq: investment-fn}
  I_j(q,y) := \mathbb{E}[\min\{P_j - y, q\} \mid P_j > y].  
\end{equation}

Similarly, the rank of $(j,q)$ given prior processing time $y$ is the ratio of the probability that $j$ completes during this quantum over the investment:
$$R_j(q,y) := \frac{\mathbb{P}(P_j - y \leq q \mid P_j > y)}{I_j(q,y)}.$$
Intuitively, the rank captures the value-per-unit-time of quantum $(j,q)$. Finally, for a given scheduling policy, we let $y_j(t)$ denote the attained processing time of job $j$ at time $t \geq 0$. With these definitions, we are ready to state the Gittins index policy.

\begin{algorithm}[Gittins Index Priority Policy (\gittins)]\label{alg: gittins}
   For each unit of time $t \geq 0$, 
     run the job $j$  achieving $\arg\max_{j \in [n], q \geq 0} R_j \big( q, y_{j}(t) \big)$. In words, based on the attained processing time of each job, run the highest-ranked job over all jobs and quantum lengths.
\end{algorithm}

Note that the policy is well-defined: a maximum is attained because all distributions are assumed to have finite support. 

In the preceding description of \textsf{GIPP}, a scheduling decision is updated at \textit{each} unit of time by recomputing the rank function. It turns out that these updates are not necessary, as the rank function exhibits useful behavior that can be used to simplify the description of the policy. In particular, each scheduling decision will instead be to run a quantum of a job without interruption, and to only update our scheduling decision (of which job to run, and for how long) at the end of running the quantum. There are three properties of this reformulation that, we will see, make it particularly useful in our setting of erroneous predicted distributions. 
\begin{itemize}
    \item The quantum lengths for each job are computed without knowledge of the other job distributions. 
    \item The quantum lengths for each job are precomputed only using the input information $\mathcal{I}$ (i.e., the lengths are not adaptive to realizations of the jobs). 
    \item The order in which quanta are run by \textsf{GIPP} is fixed using only the input information $\mathcal{I}$ (i.e., the order is not adaptive to realizations of the jobs). 
\end{itemize}

We now describe this (known) reformulation that has the above three properties. This reformulation appears in \cite{MegowV14}, and our exposition closely follows theirs. For each job $j$, we compute an ordered set of $n_j$ quanta: $(j, q_{j,1}), \dots, (j, q_{j,n_j})$, which correspond to subjobs that we will run in that order (with interruptions by subjobs from other jobs). The total length of all quanta for job $j$ is equal to the maximum support point of $\mathcal{D}_j$. For each $i \in \{1, \dots, n_j\}$, $q_{j,i}$ is computed recursively as follows. Take $y_{j,1} = 0$. We define $y_{j,i} = q_{j,1} + \cdots + q_{j, i-1}$, that is, $y_{j,i}$ is the attained processing time after the first $i-1$ quanta have been run. Then, define 
\begin{equation} \label{eq: quanta-length}
    q_{j,i} := {\arg \max}_{q \geq 0} R_j(q, y_{j,i}).
\end{equation}

Each quantum $(j, q_{j,i})$ has an associated rank: $R_j(q_{j,i}, y_{j,i})$. Thus, we may order $ \{(j, q_{j,i}) \}_{j \in [n], i \in [n_j]}$, the set of all quanta across all jobs, in decreasing order of rank, breaking ties in favor of lower indexed jobs. We call this ordering the \emph{\gittins order}.

For each job $j$ and $i \in [n_j]$, we define $H(j,i)$ to be the multi-set of all quanta (from all jobs, including job $j$) that occur \emph{no later than} quantum $(j, q_{j,i})$ in the aforementioned \gittins order. Note that for fixed $j$, these multi-sets are nested, that is, $H(j,i) \subseteq H(j, i+1)$. So, we may disjointify these multi-sets for fixed $j$ by defining 
\begin{equation} \label{eq: disjointified-quanta-prefixes}
   H'(j,i) := H(j,i) \setminus H(j,i-1), 
\end{equation}
taking $H(j,0) = \emptyset$. Note that $H'(j,i) \neq \emptyset$ for $i \geq 1$, in particular, that $(j, q_{j,i}) \in H'(j,i)$.

\begin{algorithm}[Gittins Index Priority Policy (\gittins), reformulated]\label{alg: gittins-reformulated}
Schedule job quanta in decreasing order of rank. Run each quantum for its full length or until the job finishes, whichever comes first. 
\end{algorithm}

The fact that Algorithms \ref{alg: gittins} and \ref{alg: gittins-reformulated} are equivalent policies follows from the following fact from \cite{Weiss, Gittens}: for any job $j$ and attained processing time $t$, with $q^* = \arg\max_{q \geq 0} R_j \big( q, y_{j}(t) \big)$, we have that, for any $0 < \zeta < q^*$, 
\[\max_{q \geq 0} R_j \big( q, y_{j}(t) + \zeta \big) \geq \max_{q \geq 0} R_j \big( q, y_{j}(t) \big).\]
For more details on the reformulation of Algorithm \ref{alg: gittins} as Algorithm \ref{alg: gittins-reformulated}, see \cite{MegowV14, aalto2009gittins}. 

From now on, when we refer to \textsf{GIPP}, we will be using the formulation in Algorithm \ref{alg: gittins-reformulated}. We recall that \gittins is optimal, and that we let $\textsf{GIPP}(\mathcal{I})$ refer to the expected cost (total completion time) of \textsf{GIPP} on instance $\mathcal{I}$, that is, $\textsf{GIPP}(\mathcal{I}):=\sum_{j \in [n]} \mathbb{E}[C_j]$, where $C_j$ is the (random) completion time of job $j$ under \textsf{GIPP}. (In the notation of Definition \ref{def: alg-cost}, $\textsf{GIPP}(\mathcal{I}) = \textsf{GIPP}(\mathcal{I}, \mathcal{I})$.)

\propgittinsopt*

Moreover, there is a convenient closed-form expression for the expected cost of \textsf{GIPP} which will be useful for our later analysis. 

\begin{lemma}[Lemma 2.1 in \cite{MegowV14}] \label{lem: closed-form-gipp}
    The expected cost of $\textsf{GIPP}$ on instance $\mathcal{I} = \{\dis_j\}_{j \in [n]}$, where $P_j \sim \dis_j$ for each $j \in [n]$, is given by:

    \begin{align*}
        \gittins(\mathcal{I}, \mathcal{I}) &= \sum_{j=1}^n \sum_{i=1}^{n_j} \sum_{(k,q_{k,l}) \in H'(j,i)} \mathbb{E}\left[\mathbf{1}_{\{P_j > y_{j,i}\}} \cdot \mathbf{1}_{\{P_k > y_{k,l}\}} \cdot \min\{P_k - y_{k,l}, q_{k,l} \}\right] \\
        &=  \sum_{j=1}^n \sum_{i=1}^{n_j} \sum_{(k,q_{k,l}) \in H'(j,i)}  \pr(P_j > y_{j,i},  P_k > y_{k, l}) \cdot I_k(q_{k,l}, y_{k,l}).
    \end{align*}
\end{lemma}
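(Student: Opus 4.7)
The plan is to derive a pointwise (in the randomness) expression for the completion time $C_j$ of each job, take expectations, and then exploit independence of $P_j$ and $P_k$ for $j \neq k$. Two structural features of the reformulated \gittins (\Cref{alg: gittins-reformulated}) drive the argument: the \gittins order is a deterministic function of $\mathcal{I}$ (independent of any realization), and the actual runtime contributed by a dispatched quantum $(k, q_{k,l})$ is exactly $\mathbf{1}_{\{P_k > y_{k,l}\}} \cdot \min\{P_k - y_{k,l},\, q_{k,l}\}$ --- the quantum runs for $q_{k,l}$ time units (or until $k$ completes) if $k$ is still alive, and contributes nothing if $k$ already finished.

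First I would derive a closed-form for $C_j$. Let $i^* = i^*(P_j)$ be the unique index with $y_{j,i^*} < P_j \leq y_{j,i^*} + q_{j,i^*}$, so $j$ completes during its $i^*$-th quantum. All quanta dispatched by the time $j$ completes are exactly those in $H(j, i^*)$, and later quanta contribute nothing to $C_j$. Using the disjointification $H(j, i^*) = \bigsqcup_{i=1}^{i^*} H'(j,i)$ together with the identity $\mathbf{1}_{\{i \leq i^*\}} = \mathbf{1}_{\{P_j > y_{j,i}\}}$, I can write
\[
   C_j \;=\; \sum_{i=1}^{n_j} \mathbf{1}_{\{P_j > y_{j,i}\}} \sum_{(k, q_{k,l}) \in H'(j,i)} \mathbf{1}_{\{P_k > y_{k,l}\}} \cdot \min\{P_k - y_{k,l},\, q_{k,l}\}.
\]
Taking expectations and summing over $j$ yields the first displayed equality in the statement.

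For the second equality I would split the inner expectation according to whether $k \neq j$ or $k = j$. When $k \neq j$, $\mathbf{1}_{\{P_j > y_{j,i}\}}$ is independent of the random quantities involving $P_k$, so
\[
\mathbb{E}\!\left[\mathbf{1}_{\{P_j > y_{j,i}\}} \mathbf{1}_{\{P_k > y_{k,l}\}} \min\{P_k - y_{k,l}, q_{k,l}\}\right] = \pr(P_j > y_{j,i})\,\pr(P_k > y_{k,l})\,I_k(q_{k,l}, y_{k,l}),
\]
by the definition of $I_k$; another application of independence collapses the product of marginals into $\pr(P_j > y_{j,i},\, P_k > y_{k,l})$. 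When $k = j$, the only quantum of $j$ appearing in $H'(j,i)$ is $(j, q_{j,i})$ itself, because the quanta of job $j$ appear in the \gittins order in their natural index order (a consequence of the nonincreasing-rank property of successive quanta of a single job cited after \Cref{alg: gittins-reformulated}). Hence $l = i$, the two indicators coincide with $\mathbf{1}_{\{P_j > y_{j,i}\}}^2 = \mathbf{1}_{\{P_j > y_{j,i}\}}$, and the expectation reduces to $\pr(P_j > y_{j,i}) \cdot I_j(q_{j,i}, y_{j,i})$, which equals $\pr(P_j > y_{j,i},\, P_j > y_{j,i}) \cdot I_j(q_{j,i}, y_{j,i})$ as required.

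The main obstacle I anticipate is the bookkeeping in the $k=j$ case: one must carefully argue that each $H'(j,i)$ contains precisely one quantum of job $j$, namely $(j, q_{j,i})$, which requires invoking the nonincreasing-rank property of successive quanta of a single job together with a consistent tiebreaking rule. A minor secondary point is handling boundary events where $P_j$ coincides exactly with some $y_{j,i}$, but since all distributions are finitely supported, this is absorbed into the convention used in defining $i^*$. Once these items are settled, the rest is a routine application of linearity of expectation and independence of distinct jobs.
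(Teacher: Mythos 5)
The paper does not prove this lemma itself --- it is cited directly from Megow and Vredeveld \cite{MegowV14} --- so there is no in-paper proof to compare against. Your reconstruction is correct and follows the standard derivation one would expect in \cite{MegowV14}: you write the realized completion time $C_j$ as the sum of actual runtimes of quanta in $H(j,i^*)$, rewrite $\mathbf{1}_{\{i \le i^*\}}$ as $\mathbf{1}_{\{P_j > y_{j,i}\}}$ using the strict monotonicity of the $y_{j,i}$, pass to expectations by linearity, and then handle the $k = j$ and $k \neq j$ cases separately using the law of total expectation and independence of distinct jobs. Your attention to the fact that $H'(j,i)$ contains exactly one quantum of job $j$ (namely $(j,q_{j,i})$) is well placed; this does require the nonincreasing-rank property for successive quanta of a single job, which, while not stated verbatim in the paper, does follow from the maximality of $q_{j,i}$ via a mediant-inequality argument (if $R_j(q_{j,i+1},y_{j,i+1}) > R_j(q_{j,i},y_{j,i})$, then $R_j(q_{j,i}+q_{j,i+1},y_{j,i}) > R_j(q_{j,i},y_{j,i})$, contradicting that $q_{j,i}$ is a maximizer) --- and the paper itself invokes this exact structural fact in the proof of Lemma~\ref{lem:clairvoyant_compare}, so you are on firm ground. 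The only loose end is tiebreaking between equal-rank quanta of the \emph{same} job (the paper's stated tiebreak is only across jobs), but this is a genuine gap in the paper's exposition rather than in your argument, and any consistent choice preserving index order resolves it.
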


\subsection{Gittins for Finite Support Distributions} \label{subsec: gittins-finite-support}

In this paper, we focus on the setting where every (predicted and true) job size distribution has finite support. This assumption implies that each (predicted or true) job size distribution has a finite number of quanta, which allows us to precompute all relevant quanta up-front (Algorithm \ref{alg: clairvoyant-alg}). It is likely true that our techniques can be extended to general distributions, but we focus on the finite support case to enable a clean algorithm design and analysis.

An intuitive property is that \gittins will only choose quanta for a job that line up with support points of that job, and that the largest support point coincides with the last -- of a finite number -- of the quanta. Thus, \gittins finishes all jobs. We prove these properties formally now.  

\begin{lemma}\label{prop: gittins_support}
    Let $\mathcal{I} = \{\dis_j\}_{j \in [n]}$ be given as input to  Algorithm \ref{alg: gittins-reformulated}. Let $y_{j,i}$ and $q_{j,i}$ be defined as above (Equation \ref{eq: quanta-length}). Then for all $j \in [n]$, quantum $(j,q_{j,i})$ can always be chosen so that $y_{j,i}$ is a support point of $\dis_j$ for all integer $2 \leq i \leq n_j$. Moreover, each job has finitely many quanta, so in particular, Algorithm \ref{alg: gittins-reformulated} finishes all jobs.
\end{lemma}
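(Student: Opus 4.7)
The plan is to analyze the structure of $R_j(\cdot, y)$ when $\dis_j$ has finite support and show that the argmax defining $q_{j,i}$ can always be taken at a support point of $\dis_j$; finite termination will then follow by pigeonhole.

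First I would enumerate the support points of $\dis_j$ as $s_1 < s_2 < \cdots < s_m$ and fix any $y \in \{0\} \cup \{s_1, \ldots, s_{m-1}\}$ with $y < s_m$ and $\pr(P_j > y) > 0$. Partition $(0, \infty)$ into the intervals $[s_k - y, s_{k+1} - y)$ for each $k$ with $s_k > y$, together with a final ray $[s_m - y, \infty)$. On each interior interval, the conditional probability $\pr(P_j - y \leq q \mid P_j > y)$ is constant, equal to $\pr(P_j \leq s_k \mid P_j > y)$, while $I_j(q,y) = \mathbb{E}[\min\{P_j - y, q\} \mid P_j > y]$ is affine in $q$ with strictly positive slope $\pr(P_j > s_k \mid P_j > y)$; on the final ray both quantities are constant in $q$. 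Therefore $R_j(\cdot, y)$ is strictly decreasing on each interior interval and constant on the final ray, so $\max_{q > 0} R_j(q, y)$ is attained at some point of the form $q = s_k - y$ for a support point $s_k > y$. In particular $q_{j,i}$ may be chosen so that $y_{j,i+1} = y_{j,i} + q_{j,i}$ is a support point of $\dis_j$.

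With this structural claim in hand, I would run a short induction on $i$. The base case $i=1$ gives that $q_{j,1}$ (with $y_{j,1}=0$) can be chosen so that $y_{j,2}$ is a support point of $\dis_j$; for the inductive step, if $y_{j,i}$ is a support point strictly less than $s_m$, the structural claim applied at $y = y_{j,i}$ produces $q_{j,i}$ with $y_{j,i+1}$ a strictly larger support point. The sequence $(y_{j,i})_{i \geq 2}$ therefore climbs strictly through the finite set $\{s_1, \ldots, s_m\}$, so it terminates after at most $m$ steps, with $y_{j, n_j+1} = s_m$. Since $\pr(P_j > s_m) = 0$, once the attained processing time reaches $s_m$ the job has completed with probability $1$, so Algorithm~\ref{alg: gittins-reformulated} schedules job $j$ using finitely many quanta and finishes it.

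The main (mild) obstacle is the verification that $R_j(\cdot, y)$ is strictly decreasing on each interior interval. This rests on the fact that the conditional survival probability $\pr(P_j > s_k \mid P_j > y)$, which is the slope of $I_j(\cdot, y)$ on that interval, is strictly positive whenever $s_k$ is not the largest support point above $y$. The ``constant'' behavior on the final ray causes no trouble, because its left endpoint $s_m - y$ is already of the required form $s_k - y$ with $s_k$ a support point, so it can be chosen as the argmax and breaks any ties in favor of landing on a support point.
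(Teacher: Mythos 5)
Your proposal is correct and takes essentially the same approach as the paper: an induction over quanta driven by the observation that the numerator of the rank is constant on each interval between consecutive support-point offsets while the investment $I_j(\cdot,y)$ only grows, so the argmax can always be placed at $q = s_k - y$ for a support point $s_k$. The only difference is cosmetic: you characterize the piecewise (strictly) decreasing shape of $R_j(\cdot,y)$ globally and read off the location of the maximum, whereas the paper fixes an arbitrary maximizer $q'$ and moves it left to the nearest support point, arguing the rank cannot decrease; the two arguments are logically interchangeable, and your version additionally makes the strict monotonicity on interior intervals explicit, which is a pleasant (if unnecessary) refinement.
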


\begin{proof} 

    Consider any $j \in [n]$. Without loss of generality, we may assume $0$ is not the only support point of $\dis_j$. Let $P_j \sim \dis_j$.

    Let $q_{j,1}, \dots, q_{j,n_j}$ be the lengths of the quanta that Algorithm \ref{alg: gittins} runs on job $j$, in that order. Recall that $y_{j,i} = q_{j,1} + \cdots + q_{j,i-1}$, that is, $y_{j,i}$ is the attained processing time after running the $i-1$ quantum of job $j$. We want to show that we can choose the $q_{j,i}$'s (see Equation (\ref{eq: quanta-length})) so that the $y_{j,i}$'s are support points. We show that $y_{j,i}$ can be chosen to be a support point by induction on $i = 2, \dots, n_j$. 

    For the base case $i = 2$, we want to show there exists a support point $q$ maximizing $R_j(q, 0)$ over all $q \geq 0$. (Then we will take $y
    _{j,2} = q_{j,1} := q$, as desired.) Let $q' = \arg \max_{q \geq 0} R_j(q, 0)$. Note that since the numerator of $R_j(q,0)$ is equal to $\mathbb{P}(P_j \leq q \mid P_j > 0)$, in particular, since the probability is conditioned on the event that $P_j > 0$, we know that $q'$ must be at least the smallest nonzero support point (which exists by assumption). Now decreasing $q^*$ to the largest support point that is smaller than $q'$ only increases $R_j(q', 0)$ (because the numerator stays the same, and the denominator, $I_j(q',0)$, can only increase). Thus, we may take $q_{j,1}$ to be a support point, which implies that $y_{j,2} = q_{j,1}$ is also a support point. 
    
    Now we show $y_{j,i}$ can be chosen as a support point, given the inductive assumption that $y_{j,i-1}$ can be chosen as a support point. Suppose $q' = \arg \max_{q \geq 0} R_j(q, y_{j,i-1})$. We want to show there exists $q \geq 0$ such that $R_j(q, y_{j,i-1}) = R_j(q', y_{j,i-1})$ \textit{and} $y_{j, i-1} + q$ is a support point of job $j$'s distribution. (Then we may set $q_{j,i-1} = q$ and $y_{j,i} = y_{j, i-1} + q_{j,i-1}$ will be a support point, finishing the induction.) Choose $p$ so that $y_{j,i-1} \leq p \leq q'+y_{j,i-1}$, and $p$ is the largest support point between the upper and lower bounds; note such $p$ exists since, inductively, $y_{j, i-1}$ is a support point. Then $\mathbb{P}(P_j \leq p \mid P_j > y) = \mathbb{P}(P_j \leq y_{j,i-1} + q' \mid P_j > y)$, due to the maximality of $p$. On the other hand, $I_j(p - y_{j, i-1}, y_{j,i-1}) \leq I_j(q', y_{j, i-1})$. So taking $q = p-y_{j, i-1}$, we obtain that $R_j(q, y_{j,i-1}) \geq R_j(q', y_{j, i-1})$, and $q$ has the property that $y_{j, i-1} + q = y_{j,i}$ is a support point, as desired.

    Finally, we need to prove that $n_j$, the number of quanta of job $j$, is finite. Since the support of $\dis_j$ is finite, and we have shown that every $y_{j,i}$ is a support point, we just need to show that $q_{j,i} > 0$ for each $i$ (assuming $y_{j, i}$ is not the largest support point, in which case $R_j(q, y_{j,i})$ is vacuously 0). But this is clearly true, since if $q_{j,i} = 0$, then $R_j(q_i, y_{j,i}) = 0$, whereas we know there exists a $q$, namely the distance between $y_{j,i}$ and the next largest support point, such that $R_j(q, y_{j,i}) > 0$. Thus, $q_{j,i} =0$ contradicts that \textsf{GIPP} chooses quanta maximizing rank.

\end{proof}

\section{Gittins is not Robust to Distributional Errors}\label{sec: lower}

The goal of this section is to prove \Cref{thm: lower}, which we restate here for convenience. In short, naively running \gittins using input distributions with arbitrarily small errors can result in unbounded expected cost.

\thmlower*

The key idea of the proof is that quantum ranks are sensitive to small changes in the input distributions. We recall that the rank of quantum $(j,q)$ when $j$ has attained processing time $y$ is defined by
\[ R_j(q,y) = \frac{\mathbb{P}(P_j - y \leq q \mid P_j > y)}{I_j(q,y)}.\]
Assume that $\hat{\dis}_j$ is our predicted distribution with finite support, and $\hat{P}_j \sim \hat{\dis}_j$, and that $y$ is a support point of $\hat{\dis}_j$ and $y + q$ is the \textit{next smallest} support point. Then the rank of $(j,q)$ given prior processing time $y$  can be written as $\frac{\mathbb{P}(\hat{P}_j = y + q \mid \hat{P}_j > y)}{q}$.

Now we consider modifying $\hat{\dis}_j$ to obtain $\dis^*_j$ -- the true distribution of job $j$ -- by only moving the support point $y + q$ ``to the right'' by an arbitrarily small amount to $y + q + \varepsilon$ for $\varepsilon > 0$. Then, if we run \gittins using the ranks computed on the predicted quanta on the true distributions, we will run job $j$ for $q$ time units with the ``belief'' that we will invest $q$ units of processing to obtain expected reward $\mathbb{P}(\hat{P}_j = y + q \mid \hat{P}_j > y) > 0$. However, in reality, we obtain \emph{zero} expected reward, since $\mathbb{P}(P^*_j \leq y + q \mid P^*_j > y) = 0$.

Using this idea, we can trick \gittins into running quanta which will give much less reward than expected. This forces Gittins to delay the completion of many jobs by these less-valuable quanta. We now proceed formally to prove \Cref{thm: lower}.

\begin{proof}[Proof of \Cref{thm: lower}] 
    By the monotonicity of $\alpha$-closeness (\Cref{lem: dist_mono}), it suffices to prove the theorem for $\alpha \in (1,2)$. Consider any fixed $\alpha \in (1,2)$, and then define $\alpha = 1 + \varepsilon$, where $\varepsilon \in (0,1)$. Further, let $n \in \mathbb{N}$, $n \geq 2$. Then define the true instance $\mathcal{I}^*$ to be $n$ i.i.d. jobs distributed as 
    \[P^* = \begin{cases}
        1 + \varepsilon & w.p. \quad 1-p\\
        M & w.p. \quad p
    \end{cases},\]
    where we will choose $p$ and $M$ as functions of $n$ later. Similarly, we define $\hat{\mathcal{I}}$ to be $n$ i.i.d jobs distributed as
        \[\hat{P} = \begin{cases}
        1 & w.p. \quad 1-p\\
        M & w.p. \quad p
    \end{cases}.\]
    We quickly verify that $(\mathcal{I}^*, \hat{\mathcal{I}})$ are $(1+\varepsilon)$-close. To see this, consider the natural coupling between $P^*$ and $\hat{P}$ where $P^* = 1 + \varepsilon$ exactly when $\hat{P} = 1$, and likewise $P^* = M$ exactly when $\hat{P} = M$. By our coupling, we have $\frac{1}{1 + \varepsilon} P^* \leq \hat{P} \leq P^*$. It follows, for any $x \geq 0$, we have \[\pr(P^* > (1 + \varepsilon) x) \leq \pr(\hat{P} > x) \leq \pr(P^* > x),\]
    which is only stronger than the property that $(\mathcal{I}^*, \hat{\mathcal{I}})$ are $(1+\varepsilon)$-close.

    Next, we need to understand the behaviour of \gittins on the instance $\hat{\mathcal{I}}$. Since the jobs in $\hat{\mathcal{I}}$ are distributed i.i.d. as $\hat{P}$, which has finite support, there are only three relevant quanta to consider, by \Cref{prop: gittins_support}:
    \begin{itemize}
        \item A quantum of length $M$ at attained processing time $0$ (i.e. run the job once to completion). This has rank $\frac{1}{(1-p) + Mp}$.
        \item A quantum of length $1$ at attained processing time $0$. This has rank $\frac{1-p}{1}$.
        \item A quantum of length $M - 1$ at attained processing time $1$. This has rank $\frac{1}{M-1}$.
    \end{itemize}
    We take $p = \frac{1}{n}$ and $M = n^2$. This choice of $p$ and $M$ implies that $\frac{1}{(1-p) + Mp} \leq \frac{1 - p}{1}$, so \gittins first schedules all jobs in arbitrary order for $1$ unit of time each. Afterwards, for each such job that is still unfinished we run them in arbitrary order of $M-1$ time units each (until completion). Observe that if we follow this same algorithm -- but on the jobs in instance $\mathcal{I}^*$ instead -- then we are guaranteed to finish all jobs, since we process each job up until $M$ time units.

    It remains to lower bound the expected cost $\gittins(\mathcal{I}^*, \hat{\mathcal{I}})$ (Definition \ref{def: alg-cost}) of running \gittins on the true instance $\mathcal{I}^*$ \emph{using the quanta computed for $\hat{\mathcal{I}}$}, in terms of $\gittins(\mathcal{I}^*, \mathcal{I}^*) = \textsf{OPT}(\mathcal{I}^*)$ (\Cref{prop: gittins_opt}).  First, we run each of $n$ jobs for $1$ time unit. Since every job in $\mathcal{I}$ is supported only on $1+ \varepsilon$ and $M = n^2 > 1$, we complete \textit{no} jobs at this point. Then we run each of the $n$ jobs to completion in some fixed order. We crudely lower bound $\gittins(\mathcal{I}^*, \hat{\mathcal{I}})$ as follows: The number of jobs whose realized size is $M = n^2$ among the first $\lfloor \frac{n}{2} \rfloor$ in this fixed order is distributed as $Binom(\lfloor \frac{n}{2} \rfloor, \frac{1}{n})$. In particular, with probability $ 1- (1-1/n)^{\lfloor \frac{n}{2} \rfloor} = \Omega(1)$, there is at least one such long job among the first $\lfloor \frac{n}{2} \rfloor$. On this event, there are at least $\frac{n}{2}$-many jobs that complete after this long job. These jobs contribute at least $\frac{n}{2} \cdot n^2 = \Omega(n^3)$ to the expected total completion time, and since this event happens with constant probability, we conclude $\gittins(\mathcal{I}^*, \hat{\mathcal{I}}) = \Omega(n^3)$.

    On the other hand, we can upper bound $\opt(\mathcal{I}^*)$ by the following policy: (First phase) Run each job for $1 + \varepsilon$ time units in arbitrary order. (Second phase) Then run each unfinished job to completion in arbitrary order. Now we upper bound the expected total completion time of this policy. We say a job is \emph{short} if it has realized size $1 + \varepsilon$ and \emph{long} otherwise. In the first phase, we are guaranteed to complete all short jobs. Thus, the total completion times of all short jobs is at most $\sum_{k = 1}^n k \cdot (1 + \varepsilon) = O(n^2)$. After the first phase, we are at time $n \cdot (1 + \varepsilon)$, and only the long jobs remain. Let $L \sim Binom(n, \frac{1}{n})$ be the total number of long jobs. Then the expected total completion time of all long jobs is
    \begin{align*}
        \mathbb{E} \left[ (1 + \varepsilon)n \cdot L + \sum_{k = 1}^L (M - (1 + \varepsilon)) \cdot k \right] &\leq (1 + \varepsilon)n \cdot \mathbb{E}[ L ] + \mathbb{E} \left[ \sum_{k = 1}^L M \cdot k \right]\\
        &\leq (1 + \varepsilon)n \cdot \mathbb{E}[ L ] + M \cdot \mathbb{E} \left[ \frac{L(L+1)}{2} \right]\\
        &= O(n) + O(n^2) = O(n^2),
    \end{align*}
    where in the final step we use the fact that $\mathbb{E}[L] = 1$ and $\mathbb{E}[L^2] = O(1)$ since $L \sim Binom(n, \frac{1}{n})$. This gives the desired gap, $\frac{\gittins(\mathcal{I}^*, \hat{\mathcal{I}})}{\opt(\mathcal{I^*})} = \Omega(n)$.
\end{proof}
 
\section{Robust Gittins is Robust to Distributional Errors}
\label{sec:RG}

The goal of this section is to prove \Cref{thm:clairvoyant_main}, which we restate here for convenience. In subsection \ref{subsec:RGdefn} we design and define 
the Robust Gittins algorithm, and in subsection \ref{subsec:RGanalysis}
we analyze this algorithm. 

\thmupper*

\subsection{Definition of the Robust Gittins Algorithm}
\label{subsec:RGdefn}

The Robust Gittins Index Policy (\rg) is a modification of \gittins that avoids the pitfalls of naively running \gittins using quanta computed from the \textit{predicted} instance $\hat{\mathcal{I}}$ to define the scheduling policy on the true instance $\mathcal{I}^*$ (see Theorem \ref{thm: lower}). It is helpful to revisit the lower bound instance in the proof of Theorem \ref{thm: lower}. Notice that in this particular instance, we could have ``corrected'' the policy by running each quantum of length 1 for a tiny bit longer, namely, for $\alpha = 1+\varepsilon$ time units, where $\alpha$ is the error in the distributions. This is the idea behind \rg: the policy computes the ordered list of quanta from the predicted instance $\hat{\mathcal{I}}$ (recall we dubbed this ordering the \gittins order). \rg maintains the \gittins order but extends each quantum length by a factor of $\alpha$. \rg then runs these \textit{extended} quanta in the \gittins order. Now we describe \rg formally.

\begin{algorithm}[Robust Gittins Index Policy (\rg)] \label{alg: clairvoyant-alg}
The input initially provided to the policy  is the predicted instance $\hat{\mathcal{I}} = \{\hat{\dis}_j \}_{j \in [n]}$ and a parameter $\alpha$. There is a promise that $(\mathcal{I}^*, \hat{\mathcal{I}})$ are $\alpha$-close, where $\mathcal{I}^* = \{ \dis^*_j\}_{j \in [n]}$ is the true instance of distributions (\textit{not} known to the policy). All distributions (predicted and true) are finitely supported.

Initially the algorithm computes the quanta for $\hat{\mathcal{I}}$ as in Equation (\ref{eq: quanta-length}). For each $j \in [n]$, let $\hat{n}_j$ be the number of quanta for job $j$. For each $i \in [\hat{n}_j]$, let $(j, \hat{q}_{j,i})$ denote the $i$th quantum for job $j$, and $\hat{y}_{j,i}$ be the total processing time of the first $i-1$ quanta. 
Then the algorithm orders the  quanta $ \{(j, \hat{q}_{j,i}) \}_{j \in [n], i \in [\hat{n}_j]}$  in decreasing order of rank, i.e., in the \gittins order. 

The algorithm then schedules the quanta in the \gittins order, but runs each quantum $(j, \hat{q}_{j,i})$ for $\alpha \cdot \hat{q}_{j,i}$ units (instead of $\hat{q}_{j,i}$ units), or until job $j$ finishes, whichever comes first. 
\end{algorithm}

\subsection{Analysis: Proof of \Cref{thm:clairvoyant_main}}
\label{subsec:RGanalysis}

All notation is as in Theorem \ref{thm:clairvoyant_main}. We again will abuse notation slightly and use $\rg(\mathcal{I}^*,\hat{\mathcal{I}})$ to refer to the cost of Algorithm \ref{alg: clairvoyant-alg} as in Definition \ref{def: alg-cost}, as well as to the policy $\rg$ (Algorithm \ref{alg: clairvoyant-alg}) on the pair $(\mathcal{I}^*, \hat{\mathcal{I}})$ of true and predicted distributions, respectively.
First, we show that the hypothesis that $(\mathcal{I}^*, \hat{\mathcal{I}})$ are $\alpha$-close guarantees that  $\rg(\mathcal{I}^*,\hat{\mathcal{I}})$ finishes all jobs: that is, extending the length of the quanta computed according to $\hat{\mathcal{I}}$ by a factor of $\alpha$ guarantees that the sum of the extended quanta is at least the maximum support point for each job size distribution from $\mathcal{I}^*$. Thus, the policy's total completion time is well-defined / bounded.

\begin{lemma}\label{lem:clairvoyant_complete}
Let $(\mathcal{I}^*, \hat{\mathcal{I}})$ be an $\alpha$-close pair of families of finitely-supported distributions. Then $\rg(\mathcal{I}^*,\hat{\mathcal{I}})$ completes all jobs of $\mathcal{I}^*$.
\end{lemma}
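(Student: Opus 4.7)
The approach is to reduce the lemma to a single inequality comparing the maximum support points of $\hat{\dis}_j$ and $\dis^*_j$. For each job $j$, let $\hat{M}_j$ denote the maximum support point of $\hat{\dis}_j$ and $M^*_j$ the maximum support point of $\dis^*_j$; both are finite by hypothesis. By \Cref{prop: gittins_support}, the number $\hat{n}_j$ of quanta computed for $j$ from $\hat{\dis}_j$ is finite, and (as remarked just before Algorithm \ref{alg: gittins}) the quanta exhaust $\hat{\dis}_j$: $\sum_{i=1}^{\hat{n}_j} \hat{q}_{j,i} = \hat{M}_j$. Under \rg, the total time allocated to $j$ over the course of its $\hat{n}_j$ extended quanta is $\alpha \sum_i \hat{q}_{j,i} = \alpha \hat{M}_j$, unless $j$ completes earlier. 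Hence it suffices to show that $\alpha \hat{M}_j \geq M^*_j$ for every $j \in [n]$.

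The key step --- and the only place where the $\alpha$-closeness hypothesis enters --- is to derive this support-point bound from the right-hand inequality of \Cref{def: dist_err} applied to the pair $(\dis^*_j, \hat{\dis}_j)$ at the point $x = \alpha \hat{M}_j$. Since $\hat{M}_j$ is the maximum support point of $\hat{\dis}_j$, we have $\pr(\hat{P}_j > \hat{M}_j) = 0$, so $\alpha$-closeness forces
\[
\pr(P^*_j > \alpha \hat{M}_j) \;\leq\; \alpha \cdot \pr\!\left(\hat{P}_j > \hat{M}_j\right) \;=\; 0,
\]
which immediately yields $M^*_j \leq \alpha \hat{M}_j$, as needed.

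To assemble the proof, I would then note that the \gittins order on $\bigcup_{j \in [n]} \{(j, \hat{q}_{j,i})\}_{i \in [\hat{n}_j]}$ is a finite sequence since each $\hat{n}_j$ is finite, so \rg eventually schedules every quantum belonging to $j$. By the allocation bound above, \rg processes $j$ for at least $\min\{\alpha \hat{M}_j, P^*_j\} = P^*_j$ time units with probability one, so $j$ completes; since this holds for every $j \in [n]$, all jobs of $\mathcal{I}^*$ are finished. I do not anticipate a substantive obstacle: the only thing to be careful about is applying the correct direction of the symmetric $\alpha$-close inequality so that the vanishing upper tail of $\hat{\dis}_j$ controls the upper tail of $\dis^*_j$ (rather than the other way around).
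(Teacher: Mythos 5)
Your proof is correct and follows essentially the same route as the paper's: both reduce the claim to showing $\pr(P^*_j > \alpha \hat{M}_j) = 0$ (the paper writes $x_{\max} = \alpha \hat{M}_j$) by combining the fact that the extended quanta total $\alpha \hat{M}_j$ with the $\alpha$-closeness inequality and the vanishing upper tail of $\hat{\dis}_j$. One small slip in your prose: the displayed inequality $\pr(P^*_j > \alpha \hat{M}_j) \leq \alpha\,\pr(\hat{P}_j > \hat{M}_j)$ is the \emph{left-hand} inequality of Definition~\ref{def: dist_err} applied to $(\dis^*_j,\hat{\dis}_j)$ at $x = \hat{M}_j$ (equivalently, by Lemma~\ref{lem: dist_sym}, the right-hand inequality for the swapped pair), not the right-hand inequality for $(\dis^*_j,\hat{\dis}_j)$ at $x = \alpha\hat{M}_j$ as stated; the displayed inequality itself is correct.
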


\begin{proof}
    Let $\mathcal{I}^* = \{ \dis^*_j\}_{j \in [n]}$ and $\hat{\mathcal{I}} = \{\hat{\dis}_j \}_{j \in [n]}$. For each $j$, let $P_j^* \sim \mathcal{D}_j^*$ and $\hat{P}_j \sim \hat{\dis}_j$. It will helpful to consider the random variable $\alpha \cdot \hat{P}_j$. Let $x_{\text{max}}$ be the largest support point of $\alpha \cdot \hat{P}_j$ (which exists by the finite support assumption). The key observation is that the total length of the \textit{extended} quanta (as defined in Algorithm \ref{alg: clairvoyant-alg}) is equal to  $x_{\text{max}}$; this is because the total length of (unextended) quanta for $\hat{P}_j$ is equal to the largest support point of $\hat{P}_j$.

     Now, since $(\mathcal{I}^*, \hat{\mathcal{I}})$ is $\alpha$-close, 
    \[\pr(P_j^* > x_{\text{max}}) \leq \alpha \cdot \pr\left(\hat{P}_j > \frac{1}{\alpha} \cdot x_{\text{max}}\right) = \alpha \cdot \pr(\alpha \cdot \hat{P}_j > x_{\text{max}}) = 0\]
    where in the last equality we have used the definition of $x_{\text{max}}$. Since $\pr(P_j^* > x_{\text{max}}) = 0$, we have, using the key observation in the first paragraph, that $\rg(\mathcal{I}^*,\hat{\mathcal{I}})$ completes all jobs of $\mathcal{I}^*$.
\end{proof}

The next lemma is at the heart of our analysis. We will first upper bound the cost of \rg in terms of $\gittins(\hat{\mathcal{I}}, \hat{\mathcal{I}}) = \opt(\hat{\mathcal{I}})$, the optimal expected total completion time for instance $\hat{\mathcal{I}}$ among nonanticipatory policies. Note that the quantity we ultimately want to compare the cost of \rg against is not this, but rather, $\gittins(\mathcal{I}^*, \mathcal{I}^*) = \opt(\mathcal{I}^*)$. However, the following bound will be a key intermediate step.

\begin{lemma}\label{lem:clairvoyant_compare}
    Let $(\mathcal{I}^*, \hat{\mathcal{I}})$ be an $\alpha$-close pair of families of finitely-supported distributions. Then 

    \[\rg(\mathcal{I}^*, \hat{\mathcal{I}}) \leq \alpha^3 \cdot \gittins(\hat{\mathcal{I}}, \hat{\mathcal{I}}).\]
\end{lemma}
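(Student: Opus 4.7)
The plan is to derive a closed form for $\rg(\mathcal{I}^*, \hat{\mathcal{I}})$ that directly mirrors the closed form for \gittins given in Lemma \ref{lem: closed-form-gipp}, and then compare the two sums term by term using $\alpha$-closeness.

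First, I would observe that because \rg runs the quanta in the \gittins order computed from $\hat{\mathcal{I}}$ but with each length scaled from $\hat{q}_{j,i}$ to $\alpha \hat{q}_{j,i}$, whenever job $j$'s $i$-th \rg quantum is actually reached, $j$'s attained processing time at the start of it is exactly $\alpha \hat{y}_{j,i}$. Thus job $j$ is alive at the start of its $i$-th \rg quantum if and only if $P^*_j > \alpha \hat{y}_{j,i}$, and the time spent running quantum $(k, \hat{q}_{k,l})$ under \rg equals $\mathbf{1}\{P^*_k > \alpha \hat{y}_{k,l}\} \cdot \min\{\alpha \hat{q}_{k,l},\, P^*_k - \alpha \hat{y}_{k,l}\}$. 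Summing these times over all quanta occurring no later than $j$'s completing quantum, disjointifying with $\hat{H}'(j,i)$ as in \eqref{eq: disjointified-quanta-prefixes}, and using independence of the $P^*_k$'s across $k$ when taking expectation, I would obtain
\begin{equation*}
\rg(\mathcal{I}^*, \hat{\mathcal{I}}) \;=\; \sum_{j=1}^{n} \sum_{i=1}^{\hat{n}_j} \sum_{(k,\hat{q}_{k,l}) \in \hat{H}'(j,i)} \pr\!\left(P^*_j > \alpha \hat{y}_{j,i},\, P^*_k > \alpha \hat{y}_{k,l}\right) \cdot I^*_k(\alpha \hat{q}_{k,l},\, \alpha \hat{y}_{k,l}).
\end{equation*}
Lemma \ref{lem:clairvoyant_complete} guarantees that every job completes within its $\hat{n}_j$ extended quanta, so this sum is well-defined and equals the true expected cost.

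Second, I would upper bound each summand by $\alpha^3$ times the corresponding summand of the \gittins closed form. The crucial identity is
\begin{equation*}
\pr(P^*_k > \alpha \hat{y}_{k,l}) \cdot I^*_k(\alpha \hat{q}_{k,l},\, \alpha \hat{y}_{k,l}) \;=\; \int_{\alpha \hat{y}_{k,l}}^{\alpha \hat{y}_{k,l+1}} \pr(P^*_k > t)\, dt,
\end{equation*}
and analogously without the scaling for $\hat{P}_k$. Applying the $\alpha$-closeness bound $\pr(P^*_k > t) \le \alpha \cdot \pr(\hat{P}_k > t/\alpha)$ together with the substitution $s = t/\alpha$ inflates the integral by a factor of $\alpha^2$, yielding $\pr(P^*_k > \alpha \hat{y}_{k,l}) \cdot I^*_k(\alpha \hat{q}_{k,l}, \alpha \hat{y}_{k,l}) \le \alpha^2 \cdot \pr(\hat{P}_k > \hat{y}_{k,l}) \cdot \hat{I}_k(\hat{q}_{k,l}, \hat{y}_{k,l})$. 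When $j \ne k$, the joint tail probability factors by independence, and one further application of $\alpha$-closeness, $\pr(P^*_j > \alpha \hat{y}_{j,i}) \le \alpha \cdot \pr(\hat{P}_j > \hat{y}_{j,i})$, supplies the remaining factor of $\alpha$. When $j = k$, the only quantum of $j$ lying in $\hat{H}'(j,i)$ is $(j, \hat{q}_{j,i})$ itself (so $l = i$), the two survival events coincide, and the already-established $\alpha^2$ estimate suffices. Summing the pointwise bounds over all $(j,i,k,l)$ terms then delivers $\rg(\mathcal{I}^*, \hat{\mathcal{I}}) \le \alpha^3 \cdot \gittins(\hat{\mathcal{I}}, \hat{\mathcal{I}})$.

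The main obstacle is establishing the \rg closed form in the first step: one must justify that the attained processing times under \rg really are deterministic $\alpha$-multiples of the $\hat{y}_{j,i}$'s (independent of which realizations occur, provided the quantum is reached), that the \gittins order is preserved, and that the survival indicator for $j$ at its $i$-th quantum is precisely $\mathbf{1}\{P^*_j > \alpha \hat{y}_{j,i}\}$. Once this formula is in place, the $\alpha^3$ bound reduces to a single pointwise estimate and a change of variables, which is ultimately why the symmetric $\alpha$-close definition yields such a clean proof.
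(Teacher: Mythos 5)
Your proposal is correct and takes essentially the same route as the paper: both proofs establish the same closed-form expression for $\rg(\mathcal{I}^*, \hat{\mathcal{I}})$ via the argument behind Lemma~\ref{lem: closed-form-gipp} (justified by Lemma~\ref{lem:clairvoyant_complete}), then compare it term-by-term to the closed form for $\gittins(\hat{\mathcal{I}},\hat{\mathcal{I}})$ by expressing each term as an integral of a tail probability, applying $\alpha$-closeness, changing variables, and paying one more factor of $\alpha$ for the cross-job survival probability. The only cosmetic difference is that you upper-bound $\rg$ term-by-term whereas the paper lower-bounds $\gittins$ term-by-term, which is the same inequality read in the opposite direction.
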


Before proving Lemma \ref{lem:clairvoyant_compare}, we will show how it swiftly implies Theorem \ref{thm:clairvoyant_main}. 

\begin{proof}[Proof of Theorem \ref{thm:clairvoyant_main}]
    Since $(\mathcal{I^*}, \hat{\mathcal{I}})$ is $\alpha$-close, by symmetry (Lemma \ref{lem: dist_sym}), $(\hat{\mathcal{I}}, \mathcal{I^*})$ is $\alpha$-close as well, so Lemma \ref{lem:clairvoyant_compare} implies both of the following statements:
    \begin{equation*}
        \rg(\mathcal{I}^*, \hat{\mathcal{I}}) \leq \alpha^3 \cdot \gittins(\hat{\mathcal{I}}, \hat{\mathcal{I}})
    \end{equation*}
    \begin{equation*}
        \text{and}
    \end{equation*}   
    \begin{equation*}
        \rg(\hat{\mathcal{I}}, \mathcal{I}^*) \leq \alpha^3 \cdot \gittins(\mathcal{I}^*, \mathcal{I}^*).
    \end{equation*}
Moreover, since $\rg(\hat{\mathcal{I}}, \mathcal{I}^*)$ is a scheduling policy for $\hat{\mathcal{I}}$ (that completes all jobs, by Lemma \ref{lem:clairvoyant_complete}), and $\gittins(\hat{\mathcal{I}}, \hat{\mathcal{I}})$ is optimal for $\hat{\mathcal{I}}$, we have

\[\gittins(\hat{\mathcal{I}}, \hat{\mathcal{I}}) \leq \rg(\hat{\mathcal{I}}, \mathcal{I}^*).\]

Combining the above inequalities, we have the desired bound:
\[\rg(\mathcal{I}^*, \hat{\mathcal{I}}) \leq \alpha^3 \cdot \gittins(\hat{\mathcal{I}}, \hat{\mathcal{I}}) \leq \alpha^3 \cdot \rg(\hat{\mathcal{I}}, \mathcal{I}^*) \leq \alpha^6 \cdot \gittins(\mathcal{I}^*, \mathcal{I}^*). \]

\end{proof}

Finally, we prove the key lemma, Lemma \ref{lem:clairvoyant_compare}

\begin{proof}[Proof of Lemma \ref{lem:clairvoyant_compare}]
Let $\mathcal{I}^* = \{ \dis^*_j\}_{j \in [n]}$ and $\hat{\mathcal{I}} = \{\hat{\dis}_j \}_{j \in [n]}$. For each $j$, let $P_j^* \sim \mathcal{D}_j^*$ and $\hat{P}_j \sim \hat{\dis}_j$.

We recall the closed-form expression for $\gittins(\hat{\mathcal{I}}, \hat{\mathcal{I}})$ (Lemma \ref{lem: closed-form-gipp}):

\begin{equation} \label{eq: closed-form-predicted}
    \gittins(\hat{\mathcal{I}}, \hat{\mathcal{I}}) = \sum_{j=1}^n \sum_{i=1}^{\hat{n}_j} \sum_{(k,q_{k,l}) \in \hat{H}'(j,i)} \pr(\hat{P}_j > \hat{y}_{j,i},\hat{P}_k > \hat{y}_{k,l}) \cdot \hat{I}_k(\hat{q}_{k,l}, \hat{y}_{k,l}),
\end{equation}
where $\hat{n}_j$, $\hat{y}_{j,i}$, and $\hat{q}_{j,i}$ are as in Algorithm \ref{alg: clairvoyant-alg}. Let $\{\hat{I}_k(\cdot, \cdot)\}_{k \in [n]}$ denote the investment functions for $\hat{\mathcal{I}}$ as in Equation (\ref{eq: investment-fn}).  Let $\hat{H}'(j,i)$ be as in Equation (\ref{eq: disjointified-quanta-prefixes}) for the instance $\hat{\mathcal{I}}$.

    We lower bound the outer double sum term-wise: fix $j \in [n]$ and $i \in [\hat{n}_j]$. Define 
    \[E_{j,i} := \sum_{(k,q_{k,l}) \in \hat{H}'(j,i)} \pr(\hat{P}_j > \hat{y}_{j,i},\hat{P}_k > \hat{y}_{k,l}) \cdot \hat{I}_k(\hat{q}_{k,l}, \hat{y}_{k,l}) \]
    
    Note that, by construction, the only quantum of job $j$ that is in $\hat{H}'(j,i)$ is $(j, q_{j,i})$. Thus,

    \begin{equation} \label{eq: Eij-rewritten}
        E_{j,i} = \pr(\hat{P}_j > \hat{y}_{j,i}) \cdot \hat{I}_j(\hat{q}_{j,i}, \hat{y}_{j,i}) \quad + \quad \pr(\hat{P}_j > \hat{y}_{j,i})  \cdot \sum_{\substack{(k,q_{k,l}) \in \\ \hat{H}'(j,i) \setminus (j,q_{j,i})}} \pr(\hat{P}_k > \hat{y}_{k,l}) \cdot \hat{I}_k(\hat{q}_{k,l}, \hat{y}_{k,l})
    \end{equation}
    where we have used the independence of $\hat{P}_j, \hat{P}_k$ for $j \neq k$ in the second term. 

    Now we lower bound the expression $\pr(\hat{P}_k > \hat{y}_{k,l}) \cdot \hat{I}_k(\hat{q}_{k,l}, \hat{y}_{k,l})$ for any $k,\ell$, including $k,\ell = j,i$.

    \begin{claim} \label{clm: term-wise-lower-bd}
        $\pr(\hat{P}_k > \hat{y}_{k,l}) \cdot \hat{I}_k(\hat{q}_{k,l}, \hat{y}_{k,l}) \geq \frac{1}{\alpha^2} \cdot \mathbb{E}\left[\min \{P_k^* - \alpha \cdot \hat{y}_{k, l},  \alpha \cdot \hat{q}_{k,l} \} \cdot \mathbf{1}_{\{P_k^* > \alpha \cdot \hat{y}_{k, l}\}}\right] $.
    \end{claim}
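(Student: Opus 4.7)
The plan is to prove Claim 1 by rewriting both sides using the layer-cake (survival-function) representation of expectation, then applying $\alpha$-closeness pointwise under a single integral. Write $\hat{y} = \hat{y}_{k,l}$ and $\hat{q} = \hat{q}_{k,l}$ for brevity. First, using Equation (\ref{eq: investment-fn}) and pulling $\pr(\hat{P}_k > \hat{y})$ inside the conditional expectation, the LHS equals
$$\mathbb{E}\bigl[\min\{\hat{P}_k - \hat{y}, \hat{q}\} \cdot \mathbf{1}_{\{\hat{P}_k > \hat{y}\}}\bigr].$$
This is the expectation of a non-negative random variable bounded above by $\hat{q}$, so by the layer-cake formula it equals $\int_0^{\hat{q}} \pr(\hat{P}_k > \hat{y} + t)\, dt$, since for $t \in [0, \hat{q})$ the event $\{\min\{\hat{P}_k - \hat{y}, \hat{q}\} \cdot \mathbf{1}_{\{\hat{P}_k > \hat{y}\}} > t\}$ coincides with $\{\hat{P}_k > \hat{y} + t\}$.

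Apply the same layer-cake rewriting to the RHS expectation:
$$\mathbb{E}\bigl[\min\{P_k^* - \alpha \hat{y}, \alpha \hat{q}\} \cdot \mathbf{1}_{\{P_k^* > \alpha \hat{y}\}}\bigr] = \int_0^{\alpha \hat{q}} \pr(P_k^* > \alpha \hat{y} + s)\, ds.$$
Then change variables $s = \alpha t$ to obtain $\alpha \int_0^{\hat{q}} \pr\bigl(P_k^* > \alpha(\hat{y} + t)\bigr)\, dt$. This substitution is the key move: it aligns the integration range with that of the LHS and puts the probability into the exact form produced by the $\alpha$-closeness bound.

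Now invoke $\alpha$-closeness of $(\dis_k^*, \hat{\dis}_k)$ pointwise: the left inequality of Definition \ref{def: dist_err}, applied with $x = \hat{y} + t$ for each $t \in [0, \hat{q}]$, gives
$$\pr(\hat{P}_k > \hat{y} + t) \;\geq\; \tfrac{1}{\alpha}\, \pr\bigl(P_k^* > \alpha(\hat{y} + t)\bigr).$$
Integrating over $t \in [0, \hat{q}]$ and combining with the rewriting of the RHS expectation above yields
$$\text{LHS} \;\geq\; \tfrac{1}{\alpha}\int_0^{\hat{q}} \pr\bigl(P_k^*>\alpha(\hat{y}+t)\bigr)\, dt \;=\; \tfrac{1}{\alpha^2}\cdot \alpha\int_0^{\hat{q}} \pr\bigl(P_k^*>\alpha(\hat{y}+t)\bigr)\, dt \;=\; \tfrac{1}{\alpha^2}\cdot \text{RHS expectation},$$
which is the claimed inequality.

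The proof is essentially routine once the layer-cake rewriting is in hand; the only subtlety is spotting the correct change of variables so that a single $\alpha$-closeness bound can be applied inside one integral, producing both the $\tfrac{1}{\alpha}$ from the definition and the $\tfrac{1}{\alpha}$ Jacobian factor that together yield the $\tfrac{1}{\alpha^2}$ in the claim.
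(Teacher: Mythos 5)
Your proof is correct, and it follows the same general route as the paper's: rewrite the investment in tail-probability form, apply the left-hand inequality of Definition~\ref{def: dist_err} pointwise under an integral, and use the change of variables $s = \alpha t$ to absorb the extra factor of $\alpha$, giving $\frac{1}{\alpha}\cdot\frac{1}{\alpha}=\frac{1}{\alpha^2}$.

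The difference is one of bookkeeping, but it is worth noting. The paper first splits $\mathbb{E}\bigl[\min\{\hat{P}_k - \hat{y}_{k,l}, \hat{q}_{k,l}\}\,\mathbf{1}_{\{\hat{P}_k > \hat{y}_{k,l}\}}\bigr]$ into a piece on $\{\hat{y}_{k,l} < \hat{P}_k \le \hat{y}_{k,l}+\hat{q}_{k,l}\}$ and a piece on $\{\hat{P}_k > \hat{y}_{k,l}+\hat{q}_{k,l}\}$, then converts each to survival-function form, arriving at Equation~(\ref{eq: integrals-expectation}). But that display overcounts: by the layer-cake identity,
\[
\mathbb{E}\bigl[(\hat{P}_k - \hat{y}_{k,l})\,\mathbf{1}_{\{\hat{y}_{k,l} < \hat{P}_k \le \hat{y}_{k,l}+\hat{q}_{k,l}\}}\bigr] = \int_0^{\hat{q}_{k,l}} \pr(\hat{P}_k > \hat{y}_{k,l} + t)\,dt - \hat{q}_{k,l}\cdot\pr(\hat{P}_k > \hat{y}_{k,l} + \hat{q}_{k,l}),
\]
so the two pieces sum to $\int_0^{\hat{q}_{k,l}} \pr(\hat{P}_k > \hat{y}_{k,l} + t)\,dt$ alone, with no boundary atom. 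A matching spurious atom also appears at the passage to~(\ref{eq: term2-lower}), since $\mathbb{E}\bigl[\min\{P_k^* - \alpha\hat{y}_{k,l}, \alpha\hat{q}_{k,l}\}\,\mathbf{1}_{\{P_k^* > \alpha\hat{y}_{k,l}\}}\bigr]$ likewise equals the single integral $\int_0^{\alpha\hat{q}_{k,l}} \pr(P_k^* > \alpha\hat{y}_{k,l} + t)\,dt$. Removing the extraneous atoms from both ends recovers a valid chain of (in)equalities, which is exactly your proof; the claim is of course true. Your version, which applies the layer-cake formula directly to the truncated variable $\min\{\hat{P}_k - \hat{y}_{k,l}, \hat{q}_{k,l}\}\,\mathbf{1}_{\{\hat{P}_k > \hat{y}_{k,l}\}}$ and to its analogue for $P_k^*$, sidesteps the decomposition altogether and is both shorter and free of this slip. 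The one small point of care you handled correctly: for $t \in [0,\hat{q}_{k,l})$ the event $\{\min\{\hat{P}_k - \hat{y}_{k,l}, \hat{q}_{k,l}\}\,\mathbf{1}_{\{\hat{P}_k > \hat{y}_{k,l}\}} > t\}$ really is $\{\hat{P}_k > \hat{y}_{k,l} + t\}$, since $t < \hat{q}_{k,l}$ makes the truncation inactive on this range and $t \ge 0$ makes the indicator condition automatic.
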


\begin{proof}[Proof of Claim \ref{clm: term-wise-lower-bd}]
Analogous to the above, let $\{I_k^*(\cdot, \cdot)\}_{k \in [n]}$ denote the investment functions for $\mathcal{I}^*$. We have 
    \begin{align}
        \pr(\hat{P}_k > \hat{y}_{k,l}) \cdot \hat{I}_k(\hat{q}_{k,l}, \hat{y}_{k,l}) &= \pr(\hat{P}_k > \hat{y}_{k,l}) \cdot \mathbb{E}[\min \{\hat{P}_k - \hat{y}_{k, l}, \hat{q}_{k,l} \} \mid \hat{P}_k > \hat{y}_{k, l}] \notag \\
        &= \mathbb{E}\left[\min \{\hat{P}_k - \hat{y}_{k, l}, \hat{q}_{k,l} \} \cdot \mathbf{1}_{\{\hat{P}_k > \hat{y}_{k, l}\}}\right] \label{eq: law-total-prob}\\
        &= \mathbb{E}\left[(\hat{P}_k - \hat{y}_{k, l}) \cdot \mathbf{1}_{\{\hat{P}_k \leq \hat{y}_{k,l} + \hat{q}_{k,l}\}} \mathbf{1}_{\{\hat{P}_k > \hat{y}_{k, l}\}}\right] + \mathbb{E}\left[\hat{q}_{k, l} \cdot \mathbf{1}_{\{\hat{P}_k > \hat{y}_{k,l} + \hat{q}_{k,l}\}} \mathbf{1}_{\{\hat{P}_k > \hat{y}_{k, l}\}}\right] \notag \\
        &=  \int_0^{\hat{q}_{k,l}} \pr(\hat{P}_k > \hat{y}_{k,l} + t) dt + \hat{q}_{k,l} \cdot \pr(\hat{P}_k > \hat{y}_{k,l} + \hat{q}_{k,l}) \label{eq: integrals-expectation} \\
        &\geq \frac{1}{\alpha} \cdot \int_0^{\hat{q}_{k,l}} \pr(P_k^* > \alpha \cdot (\hat{y}_{k,l} + t)) dt + \frac{1}{\alpha} \cdot \hat{q}_{k,l} \cdot \pr(P_k^* > \alpha \cdot (\hat{y}_{k,l} + \hat{q}_{k,l})) \label{eq: alpha-close-lower-bd}\\
        &= \frac{1}{\alpha} \cdot \left( \frac{1}{\alpha} \cdot\int_0^{\alpha \cdot \hat{q}_{k,l}} \pr(P_k^* > \alpha \cdot \hat{y}_{k,l} + t) dt +  \hat{q}_{k,l} \cdot \pr(P_k^* > \alpha \cdot(\hat{y}_{k,l} + \hat{q}_{k,l})) \right) \label{eq: u-sub} \\
        &= \frac{1}{\alpha^2} \cdot \left( \int_0^{\alpha \cdot \hat{q}_{k,l}} \pr(P_k^* > \alpha \cdot \hat{y}_{k,l} + t) dt + \alpha \cdot \hat{q}_{k,l} \cdot \pr(P_k^* > \alpha \cdot (\hat{y}_{k,l} + \hat{q}_{k,l})) \right) \notag \\
        &= \frac{1}{\alpha^2} \cdot \mathbb{E}\left[\min \{P_k^* - \alpha \cdot \hat{y}_{k, l},  \alpha \cdot \hat{q}_{k,l} \} \cdot \mathbf{1}_{\{P_k^* > \alpha \cdot \hat{y}_{k, l}\}}\right] \label{eq: term2-lower} 
    \end{align}

where in (\ref{eq: law-total-prob}) we have used the Law of Total Expectation, in (\ref{eq: integrals-expectation}) the formula for expected value in terms of integrals, in (\ref{eq: alpha-close-lower-bd}) the $\alpha$-close hypothesis, and in (\ref{eq: u-sub}) a $u$-substitution.
\end{proof}

Continuing from Equation (\ref{eq: Eij-rewritten}), we have:
\begin{align}
     E_{j,i} &= \pr(\hat{P}_j > \hat{y}_{j,i}) \cdot \hat{I}_j(\hat{q}_{j,i}, \hat{y}_{j,i}) + \pr(\hat{P}_j > \hat{y}_{j,i})  \cdot \sum_{\substack{(k,q_{k,l}) \in \\ \hat{H}'(j,i) \setminus (j,q_{j,i})}} \pr(\hat{P}_k > \hat{y}_{k,l}) \cdot \hat{I}_k(\hat{q}_{k,l}, \hat{y}_{k,l}) \notag \\
     &\geq \frac{1}{\alpha^2} \cdot \mathbb{E}\left[\min \{P_j^* - \alpha \cdot \hat{y}_{j, i},  \alpha \cdot \hat{q}_{j,i} \} \cdot \mathbf{1}_{\{P_j^* > \alpha \cdot \hat{y}_{j, i}\}}\right] \notag \\
     &\quad + \quad \frac{1}{\alpha} \cdot \pr(P_j^* > \alpha \cdot \hat{y}_{j,i}) \cdot \sum_{\substack{(k,q_{k,l}) \in \\ \hat{H}'(j,i) \setminus (j,q_{j,i})}} \frac{1}{\alpha^2} \cdot \mathbb{E}\left[\min \{P_k^* - \alpha \cdot \hat{y}_{k, l},  \alpha \cdot \hat{q}_{k,l} \} \cdot \mathbf{1}_{\{P_k^* > \alpha \cdot \hat{y}_{k, l}\}}\right] \label{eq: replacements} \\
     &\geq \frac{1}{\alpha^3} \cdot \mathbb{E}\left[\min \{P_j^* - \alpha \cdot \hat{y}_{j, i},  \alpha \cdot \hat{q}_{j,i} \} \cdot \mathbf{1}_{\{P_j^* > \alpha \cdot \hat{y}_{j, i}\}}\right] \notag \\
     &\quad + \quad \frac{1}{\alpha^3} \cdot  \sum_{\substack{(k,q_{k,l}) \in \\ \hat{H}'(j,i) \setminus (j,q_{j,i})}} \pr(P_j^* > \alpha \cdot \hat{y}_{j,i}) \cdot \mathbb{E}\left[\min \{P_k^* - \alpha \cdot \hat{y}_{k, l},  \alpha \cdot \hat{q}_{k,l} \} \cdot \mathbf{1}_{\{P_k^* > \alpha \cdot \hat{y}_{k, l}\}}\right] \notag \\
     &= \frac{1}{\alpha^3} \cdot \mathbb{E}\left[\min \{P_j^* - \alpha \cdot \hat{y}_{j, i},  \alpha \cdot \hat{q}_{j,i} \} \cdot \mathbf{1}_{\{P_j^* > \alpha \cdot \hat{y}_{j, i}\}}\right] \notag \\
     &\quad + \quad \frac{1}{\alpha^3} \cdot  \sum_{\substack{(k,q_{k,l}) \in \\ \hat{H}'(j,i) \setminus (j,q_{j,i})}} \mathbb{E}[\mathbf{1}_{\{P_j^* > \alpha \cdot \hat{y}_{j,i})\}}] \cdot \mathbb{E}\left[\min \{P_k^* - \alpha \cdot \hat{y}_{k, l},  \alpha \cdot \hat{q}_{k,l} \} \cdot \mathbf{1}_{\{P_k^* > \alpha \cdot \hat{y}_{k, l}\}}\right] \notag \\
     &= \frac{1}{\alpha^3} \cdot \mathbb{E}\left[\min \{P_j^* - \alpha \cdot \hat{y}_{j, i},  \alpha \cdot \hat{q}_{j,i} \} \cdot \mathbf{1}_{\{P_j^* > \alpha \cdot \hat{y}_{j, i}\}}\right] \notag \\
     &\quad + \quad \frac{1}{\alpha^3} \cdot  \sum_{\substack{(k,q_{k,l}) \in \\ \hat{H}'(j,i) \setminus (j,q_{j,i})}} \mathbb{E}\left[\min \{P_k^* - \alpha \cdot \hat{y}_{k, l},  \alpha \cdot \hat{q}_{k,l} \} \cdot \mathbf{1}_{\{P_k^* > \alpha \cdot \hat{y}_{k, l}\}} \cdot \mathbf{1}_{\{P_j^* > \alpha \cdot \hat{y}_{j,i}\}}\right] \label{eq: merge-expectation} \\
     &= \frac{1}{\alpha^3} \cdot \sum_{(k,q_{k,l}) \in  \hat{H}'(j,i)} \mathbb{E}\left[\min \{P_k^* - \alpha \cdot \hat{y}_{k, l},  \alpha \cdot \hat{q}_{k,l} \} \cdot \mathbf{1}_{\{P_k^* > \alpha \cdot \hat{y}_{k, l}\}} \cdot \mathbf{1}_{\{P_j^* > \alpha \cdot \hat{y}_{j,i}\}}\right] \label{eq: Eij-final-bd} 
\end{align}
where in (\ref{eq: replacements}) we have used both Claim \ref{clm: term-wise-lower-bd} and the bound $\pr(\hat{P}_j > \hat{y}_{j,i}) \geq \frac{1}{\alpha} \cdot \pr(P_j^* > \alpha \cdot \hat{y}_{j,i})$ from the $\alpha$-close hypothesis, and in (\ref{eq: merge-expectation}) we have used that $P_k^*$ and $P_j^*$ are independent for $j \neq k$ (recalling also that the only quantum of job $j$ that is in $\hat{H}'(j,i)$ is $(j, q_{j,i})$). 
    
     Substituting (\ref{eq: Eij-final-bd}) into (\ref{eq: closed-form-predicted}), we conclude the proof:
     \begin{align*}
        \gittins(\hat{\mathcal{I}}, \hat{\mathcal{I}}) &\geq \frac{1}{\alpha^3} \cdot  \sum_{j=1}^n \sum_{i=1}^{n_j} \sum_{(k,l) \in \hat{H}'(j,i)} \mathbb{E}\left[\min \{P_k^* - \alpha \cdot \hat{y}_{k, l},  \alpha \cdot \hat{q}_{k,l} \} \cdot \mathbf{1}_{\{P_k^* > \alpha \cdot \hat{y}_{k, l}\}} \cdot \mathbf{1}_{\{P_j^* > \alpha \cdot \hat{y}_{j,i}\}}\right] \\
        &= \frac{1}{\alpha^3} \cdot \rg(\mathcal{I}^*, \hat{\mathcal{I}}).
     \end{align*}
The last equality follows line-by-line from the proof of Lemma \ref{lem: closed-form-gipp} (Lemma 2.1 in \cite{MegowV14}). In particular, the proof gives the same closed-form expression for the scheduling policy's expected cost for \textit{any} quanta lengths (not necessarily the ones yielding an optimal policy). The only requirement is that the quanta lengths of a job sum to at least the largest support point, i.e., that the jobs finish. This is satisfied in our case, by Lemma \ref{lem:clairvoyant_complete}.
\end{proof}

\section{Conclusion}

This paper introduced a new stochastic scheduling model that captures errors in distributions given to the policy. We devised a policy  that is provably robust to errors in reported distributions for perhaps the simplest and most widely studied stochastic scheduling problem (minimizing the expected mean completion time of stochastic jobs on a single machine with preemption), giving an answer to the open question of Scully, Grosof, and Mitzenmacher \cite{scully_et_al:LIPIcs.ITCS.2022.114} for the no-arrivals setting with non-indentically distributed jobs (rather than the M/G/1 queue). The performance of our policy is parametrized in terms of a distance measure of our design, and our policy approaches the optimal as the error goes to $1$ (no error). We emphasize that we make minimal assumptions on the scheduling environment: the job size distributions have finite support and are independent.

This model initiates a new perspective on coping with erroneous distributions to provide robust stochastic optimization algorithms. Algorithms for many stochastic optimization problems are sensitive, requiring the distributions given to the algorithm to be precisely correct.  It is of interest to apply the model considered in this paper more broadly in stochastic optimization. Some especially interesting open questions are:
\begin{itemize}
    \item Our policy needs to know an upper bound on the error, $\alpha$, up-front. Can we design a policy that does not need to know $\alpha$? The main challenge in this harder setting is that the scheduler only observes a \emph{single sample} of the true distribution (and, as in our setting, does not know the job size until the job is done processing). In particular, the scheduler cannot necessarily differentiate between errors in the predicted distributions and randomness in the true distributions.
    \item While we consider one formulation of Gittins applied to a particular scheduling problem, the Gittins index policy is optimal for a wide range of stochastic decision-making problems (e.g. multi-armed bandits). Can we generalize our policy and error measure in this paper to other problems where Gittins  is optimal? In particular, this paper address the case with no arrivals. It would be interesting to consider the case where there are arrivals, e.g., stochastic arrivals as in the M/G/1 queue.  The analytical techniques in this paper leverage properties of the Gittins index in the no arrival case and we expect the case with arrivals to require new ideas. 
    \item This paper offers a new model for how to make algorithms robust to erroneous distributions.  It is of interest to better understand this and related models for stochastic optimization problems.  
\end{itemize}

It is plausible that our idea of adopting a symmetric distance measure (as contrasted with, e.g., the popular distance measure of KL divergence),
and then using this symmetry between the predicted and true distributions
to enable a clean analysis, could find application in addressing such open questions. 

\printbibliography

\end{document}